\newcommand{\va}[1]{{\color{black}\sl  #1}}
\begin{document}

\title{Improved approximation algorithm for $k$-level UFL with penalties, a simplistic view on randomizing the scaling parameter}

\author{Jaroslaw Byrka\inst{1} \thanks{jby@ii.uni.wroc.pl, supported by FNP HOMING PLUS/2010-1/3 grant and MNiSW grant number N N206 368839, 2010-2013 } \and Shanfei Li\inst{2} \thanks{shanfei.li@tudelft.nl}\and Bartosz Rybicki\inst{1} \thanks{bry@ii.uni.wroc.pl, research supported by NCN 2012/07/N/ST6/03068 grant}}

\institute{Institute of Computer Science, University of Wroclaw, Poland,\\
\and
Delft Institute of Applied Mathematics, TU Delft, The Netherlands}

\maketitle

\begin{abstract}
  The state of the art in approximation algorithms for facility location problems are complicated combinations of various techniques.
  In particular, the currently best 1.488-approximation algorithm for the uncapacitated facility location (UFL) problem by Shi Li is presented as a result of a non-trivial randomization of a certain scaling parameter
  in the LP-rounding algorithm by Chudak and Shmoys combined with a primal-dual algorithm of Jain et al.
  In this paper we first give a simple interpretation of this randomization process in terms of solving an auxiliary (factor revealing) LP.
  \va{Then, armed with this simple view point, we exercise the randomization on a more complicated algorithm for the $k$-level version of the problem with penalties in which the planner has the option to pay a penalty instead of connecting chosen clients, which results in an improved approximation algorithm.}
\end{abstract}

\section{Introduction}

In the uncapacitated facility location (UFL) problem the goal is to open facilities in a subset of given locations and connect each client to an open facility so as to minimize the sum of opening costs and connection costs. In the penalty avoiding (prize collecting) variant of the problem, a fixed penalty can be paid instead of connecting a client.

In the $k$-level uncapacitated facility location problem with penalties ($k$-level UFLWP), we are given a set $C$ of clients and a set $F = \bigcup_{t = 1}^{k} F_{l_t}$ of facilities (locations to potentially open a facility) in a metric space. Facilities are of $k$ different types (levels), e.g., for $k=3$ one may think of these facilities as shops, warehouses and factories.
Each set $F_{l_t}$ contains all facilities on level $t$ and the sets $F_{l_t}$ are pairwise disjoint.
Each client $j$ can either be connected to precisely one facility at each of $k$ levels (via a path), or be rejected in which case the penalty $p_j$ must be paid ($p_j$ can be considered as the loss of profit).
To be more precise, for a client $j$ to be connected, it must be connected with a path $(j,i_{1}, i_{2}, \cdots, i_{k-1}, i_{k})$, where $i_t$ is an open facility on level $t$.
The cost of connecting points $i,j \in C \cup F$, is the distance between $i$ and $j$, denoted by $c_{ij}$. The cost of opening facility $i$ is $f_{i}$ ($f_{i}\geq 0$).
The goal is to minimize the sum of the total cost of opening facilities (at all levels), the total connection cost and the total penalty cost.
In the uniform version of the problem all penalties are the same, i.e., for any two clients $j_1,j_2 \in C$ we have $p_{j_1} = p_{j_2}$.

\subsection{Related work}

If $p_j, j\in C$ are big enough, $k$-level UFLWP is the $k$-level UFL problem, for which Krishnaswamy and Sviridenko \cite{Krishnaswamy} showed
$1.61$-hardness of approximation for general $k$ and $1.539$-hardness for $k=2$. Actually, even for $k=1$ Guha and Khuller \cite{Guha} showed that the approximation ratio is at least $1.463$, unless $NP\subseteq DTIME(n^{\log\log n})$. The current best known approximation ratio for this simplest case $k=1$ is $1.488$ by Li \cite{ShiLi}.

For $2$-level UFL problem Shmoys, Tardos, and Aardal \cite{shmoys_stoc97} gave the first constant factor approximation algorithm by extending the algorithm for $1$-level and obtaining an approximation ratio $3.16$. Subsequently, Aardal, Chudak, and Shmoys \cite{Aardal} used randomized rounding to get the first algorithm for general $k$, which had approximation ratio of $3$.
Ageev, Ye and Zhang \cite{Ageev} gave a combinatorial $3.27$-approximation algorithm for general $k$ by reducing the $k$-level directly into $1$-level problem. By recursive reduction, i.e., reducing $k$-level to $k-1$ level, they obtained an improved $2.43$-approximation for $k=2$ and $2.85$ for $k=3$. Later, this was improved by Zhang \cite{Zhang}, who combined the maximization version of $1$-level UFL problem and dual-fitting to get a $1.77$-approximation algorithm for $k=2$, and a $2.53$-approximation for $k=3$.
Byrka and Aardal \cite{Byrka} improved the ratio for $k=3$ to $2.492$.
For $k>2$ the ratio was recently improved by Byrka and Rybicki \cite{Rybicki} to $2.02$ for $k=3$, $2.14$ for $k=4$, and the ratio converges to 3 when $k\rightarrow + \infty$.

UFL with penalties was first introduced by Charikar et al. \cite{Charikar}, who gave a $3$-approximation algorithm based on a primal-dual method. Later, Jain et al. \cite{Jain} indicated that their greedy algorithm for UFL could be adapted to UFLWP with the approximation ratio $2$. Xu and Xu \cite{Xu:2005,Xu:2009} proposed a $2.736$-approximation algorithm based on LP-rounding and a combinatorial $1.853$-approximation algorithm by combining local search with primal-dual. Later, Geunes et al. \cite{Geunes} presented an algorithmic framework which can extend any LP-based $\alpha$-approximation algorithm for UFL to get an $(1-e^{-1/\alpha})^{-1}$-approximation algorithm for UFL with penalties. As a result, they gave a $2.056$-approximation algorithm for this problem. Recently, Li et al. \cite{YuLi} extended the LP-rounding algorithm by Byrka and Aardal \cite{Byrka} and the analysis by Li \cite{ShiLi} to UFLWP to give the currently best $1.5148$-approximation algorithm.

For multi-level UFLWP, Asadi et al. \cite{Asadi} presented an LP-rounding based $4$-approximation algorithm by converting the LP-based algorithm for UFLWP by Xu and Xu \cite{Xu:2005} to $k$-level. To the best of our knowledge, this is the only algorithm for multi-level UFLWP in the literature.

\subsection{Our contribution}

We first show that algorithms whose performance can be analysed with a linear function of certain instance parameters, like the Chudak and Shmoys algorithm \cite{Chudak} for UFL, can be easily combined and analysed with a natural factor revealing LP. This simplifies the argument of Shi Li~\cite{ShiLi} for his $1.488$-approximation algorithm for UFL \va{since an explicit distribution for the parameters obtained by a linear program is not necessary in our factor revealing LP}.

With this tool one can easily randomize the scaling factor in LP-rounding algorithms for various variants of the UFL problem. \va{We demonstrate this by randomizing the algorithm for $k$-level UFLWP. For $k$-level UFL we can get the same approximation ratios as for $k$-level UFLWP by setting $p_j=+\infty, j\in C$.}

Note that the previously best ratio is 4 for $k$-level UFLWP ($k\geq 2$) \cite{Asadi} and $1.5148$ for $k=1$  \cite{YuLi}. The following table shows how much we improve the approximation ratios of our algorithm for $k = 1, \dots, 10$ by involving randomization of the scaling factor. \va{Irrespective of the way in which we choose $\gamma$, deterministically or randomly, approximation ratio converges to three.}
\begin{table}
\centering
  \begin{tabular}{ c | c | c | c | c | c | c | c | c | c | c }
    $k$ & 1 & 2 & 3 & 4 & 5 & 6 & 7 & 8 & 9 & 10 \\ \hline
    no randomization of $\gamma$ & 1.58 & 1.85 & 2.02 & 2.14 & 2.24 & 2.31 & 2.37 & 2.42 & 2.46 & 2.50 \\ \hline
    with randomization of $\gamma$ & 1.52 & 1.79 & 1.97 & 2.09 & 2.19 & 2.27 &2.33 & 2.39 & 2.43 & 2.47 \\
  \end{tabular}
  \caption{Comparison of ratios.}
  \label{improved_ratios}
\end{table}

\section{Simple version of Li's argument}

Consider the following standard LP relaxation of UFL.

\begin{eqnarray}
\label{lp_ufl:goal}
  min \sum_{i \in F}{\sum_{j \in C} {c_{ij}x_{ij}}} &+& \sum_{i \in F} y_i f_i\\
 \label{lp_ufl:connected}
  \sum_{i \in F} x_{ij}&=& 1 ~~~~~\forall_{j \in C}\\
   y_i - x_{ij}&\geq& 0 ~~~~~\forall_{i \in F, j \in C} \\
   x_{ij}, y_{i} &\geq& 0 ~~~~~\forall_{i \in F, j \in C}
\end{eqnarray}

Chudak and Shmoys \cite{Chudak} gave a randomized rounding algorithm for UFL based on this relaxation. Later
Byrka and Aardal \cite{Byrka} considered a variant of this algorithm where the facility opening variables were initially scaled up by a factor of $\gamma$. They showed that
for $\gamma \geq \gamma_0 \approx 1.67$ the algorithm returns a solution with cost at most $\gamma$ times the fractional facility opening cost plus $1+2e^{-\gamma}$ times the fractional connection cost. 
This algorithm, when combined with the (1.11, 1.78)-approximation algorithm of Jain, Mahdian and Saberi~\cite{Jain} (JMS algorithm for short), 
is easily a 1.5-approximation algorithm for UFL. More recently, Li \cite{ShiLi} showed that 
\va{by randomly choosing the scaling parameter $\gamma$ from an certain probability distribution one obtains an improved 1.488-approximation algorithm. 
A natural question is what improvement this technique gives in the $k$-level variant.}

In what follows we present our simple interpretation and sketch the analysis of the randomization by Li. We argue that a certain factor revealing LP provides a valid upper bound on the obtained approximation ratio. The appropriate probability distribution for the scaling parameter (engineered and discussed in detail in \cite{ShiLi}) may in fact be directly read from the dual of our LP. While we do not claim to get any deeper understanding of the randomization process itself, the simpler formalism we propose is important for us to apply randomization to a more complicated algorithm for $k$-level UFL, which we describe next.

\subsection{Notation}

Let $F_j$ denote the set of facilities which client $j \in C$ is fractionally connected to, i.e., facilities $i$ with $x_{ij} > 0$ in the optimal LP solution. Since for uncapacitated facility location problems one can split facilities before rounding, to simplify the presentation, we will assume that $F_j$ contains lots of facilities with very small fractional opening $y_i$. This will enable splitting $F_j$ into subsets of desired total fractional opening.

\begin{definition}[definition 15 from \cite{ShiLi}]
 Given an UFL instance and its optimal fractional solution $(x^*, y^*)$, the characteristic function $h_j : [0, 1] \longmapsto R$ of a client $j \in C$ is the following. Let $i_1, i_2, \cdots, i_m$ denote the facilities in $F_j$, in a non-decreasing order of distances to j. Then $h_j(p) = d(i_t, j)$, where $t$ is the minimum number such that $\sum_{s=1}^{t}y^*_{i_s} \geq p$. Furthermore, define $h(p) = \sum_{j \in C} h_j(p)$ as the characteristic function for the entire fractional solution.
\end{definition}

\begin{definition}
 Volume of set $F' \subseteq F$, denoted by $vol(F')$ is the sum of facility openings in that set, i.e., $vol(F') = \sum_{i \in F'} y^*_i$.
\end{definition}

 For $l = 1, 2 \dots, n$ define $\gamma_l = 1 + 2 \cdot \frac{n - l}{n}$, which will form the support for the probability distribution of the scaling parameter $\gamma$. Suppose that all facilities are sorted in an order of non-decreasing distances from client $j \in C$. Scale up all $y^*$ variables by $\gamma_l$ and divide the set of facilities $F_j$ into two disjoint subsets: the close facilities of client $j$, $F_j^{C_l}$, such that $vol(F_j^{C_l}) = 1$; and the distant facilities $F_j^{D_l} = F_j \setminus F_j^{C_l}$. Note that $vol(F_j^{D_l}) = \gamma_l - 1$.
Observe that$\frac{1}{\gamma_k} < \frac{1}{\gamma_l} \Rightarrow F_j^{C_k} \subset F_j^{C_l} \wedge F_j^{C_l} \setminus F_j^{C_k} \neq \emptyset$. We now split $F_j$ into disjoint subsets $F_j^l$. Define $F_j^{C_0} = \emptyset$ and $F_j^l = F_j^{C_l} \setminus F_j^{C_{l-1}}$, where $l = 1, 2 \dots, n$. The average distance from $j$ to facilities in $F_j^l$ is $c_l(j) = \int_{1/\gamma_{l-1}}^{1/\gamma_{l}} h_j(p)~dp$ for $l > 1$ and $\int_{0}^{1/\gamma_{1}} h_j(p)~dp$ for $l = 1$. Note that $c_{l}(j) \leq c_{l+1}(j)$ and $D_{max}^{l}(j) \leq c_{l+1}(j)$, where $D_{max}^{l}(j) = max_{i \in F_j^l}{c_{ij}}$.

Since the studied algorithm with the scaling parameter $\gamma = \gamma_k$ opens each facility $i$ with probability $\gamma_k \cdot y_i^*$,
and there is no positive correlation between facility opening in different locations,
the probability that at least one facility is open from the set $F_j^l$ is at least $1 - e^{-\gamma_k \cdot vol(F_j^l)}$.

Crucial to the analysis is the length of a connection via the cluster center $j'$ for client $j$ when no facility in $F_j$ is open.
Consider the algorithm with a fixed scaling factor $\gamma=\gamma_k$, an arbitrary client $j$ and its cluster center $j'$.
Li gave the following upper bound on the expected distance from $j$ to an open facility around its cluster center $j'$.
 \begin{lemma} [Lemma 14 from \cite{ShiLi}]
 \label{cluster_close_distance}
  If no facility in $F_j$ is opened, the expected distance to the open facility around $j'$ is at most $\gamma_k D_{av}(j) + (3 - \gamma_k)D_{max}^k(j)$,
  where $ D_{av}(j) = \sum_{i \in F_j} c_{ij}x_{ij}^*$.
 \end{lemma}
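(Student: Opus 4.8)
The plan is to run the Chudak--Shmoys cluster-center analysis \cite{Chudak} in the refined form of \cite{ShiLi}. Condition on the bad event $\mathcal E$ that no facility of $F_j$ is opened; then the algorithm connects $j$ through the unique facility $i'$ opened inside the bundle $F_{j'}^{C_k}$ of $j$'s cluster center $j'$. After scaling the openings by $\gamma_k$ we have $vol(F_{j'}^{C_k})=1$, and the rounding of a cluster center opens exactly one facility of its bundle, with $\Pr[i'=i]=\gamma_k y_i^*$ for every $i\in F_{j'}^{C_k}$.

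\emph{First}, I would bound the expected length of the hop from $j'$ to $i'$. From the law of $i'$, $\mathbb E\,c_{j'i'}=\gamma_k\int_0^{1/\gamma_k}h_{j'}(p)\,dp$. Since $h_{j'}$ is non-decreasing, $\int_{1/\gamma_k}^{1}h_{j'}(p)\,dp\ge(1-1/\gamma_k)\,h_{j'}(1/\gamma_k)=(1-1/\gamma_k)D_{max}^k(j')$; subtracting this from $D_{av}(j')=\int_0^{1}h_{j'}(p)\,dp$ gives
\[ \mathbb E\,c_{j'i'}\ \le\ \gamma_k D_{av}(j')-(\gamma_k-1)D_{max}^k(j'). \]
The cruder estimate $\mathbb E\,c_{j'i'}\le D_{max}^k(j')$, obtained by bounding the integrand by its right endpoint, is the sharper one when $\gamma_k$ is close to $3$.

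\emph{Next}, I would reach $j'$ from $j$. Because $j$ was removed from the clustering when $j'$ became a cluster center, the close sets of $j$ and $j'$ share a facility $i_0$, so two triangle inequalities give $c_{ji'}\le c_{ji_0}+c_{i_0j'}+c_{j'i'}\le D_{max}^k(j)+D_{max}^k(j')+c_{j'i'}$. Taking expectations under $\mathcal E$ and inserting the previous step, $\mathbb E[c_{ji'}\mid\mathcal E]\le D_{max}^k(j)+(2-\gamma_k)D_{max}^k(j')+\gamma_k D_{av}(j')$. The clustering picks cluster centers in non-decreasing order of a potential that dominates both $D_{av}(\cdot)$ and $D_{max}^k(\cdot)$, so $D_{av}(j')\le D_{av}(j)$ and $D_{max}^k(j')\le D_{max}^k(j)$. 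Substituting these for $\gamma_k\le 2$ --- and, for $\gamma_k$ close to $3$, using the cruder bound of the previous step together with the fact that conditioning on $\mathcal E$ keeps $i'$ outside $F_j$ --- collapses the right-hand side to $\gamma_k D_{av}(j)+(3-\gamma_k)D_{max}^k(j)$, the coefficients of $D_{max}^k(j)$ combining as $1+(2-\gamma_k)=3-\gamma_k$ in the main regime, which is the claimed inequality.

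The step I expect to be the main obstacle is the conditioning on $\mathcal E$. The identity $\Pr[i'=i]=\gamma_k y_i^*$ is unconditional, whereas conditioning on $\mathcal E$ re-weights $i'$ toward $F_{j'}^{C_k}\setminus F_j$ whenever $F_j$ meets the bundle, and this is exactly the delicate point. I would resolve it as in \cite{ShiLi}: split facilities --- legitimate since the problem is uncapacitated --- so that the bundle used for $j$ is disjoint from $F_j$, or appeal to the negative correlation of the rounding to argue the conditioning cannot raise the relevant expectation. A secondary subtlety is that a single clustering must be good for every scaling parameter $\gamma_\ell$ in the support at once; this is precisely what the monotonicity relations $c_\ell(j)\le c_{\ell+1}(j)$ and $D_{max}^\ell(j)\le c_{\ell+1}(j)$ noted just before the lemma are there to ensure.
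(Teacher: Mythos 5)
The paper never proves this statement itself --- it is imported verbatim as Lemma~14 of \cite{ShiLi} --- so your attempt has to be measured against Li's original argument, and as it stands it has a genuine gap exactly where you suspect, plus a second one you do not flag. The conditioning on $\mathcal E$ is not a technicality to be outsourced; it is the whole content of the lemma. Your estimate $\mathbb E[c_{j'i'}]\le \gamma_k D_{av}(j')-(\gamma_k-1)D_{max}^k(j')$ is valid only for the \emph{unconditional} law $\Pr[i'=i]=\gamma_k y^*_i$ on $F_{j'}^{C_k}$; conditioning on ``no facility of $F_j$ opens'' restricts $i'$ to $F_{j'}^{C_k}\setminus F_j$, and since the deleted facilities $F_{j'}^{C_k}\cap F_j$ may be precisely the ones closest to $j'$, the conditional expectation can strictly exceed the unconditional one. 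Neither of your proposed repairs works: negative correlation points the wrong way for this quantity, and facility splitting duplicates co-located fractional mass but cannot make the cluster center's close set disjoint from $F_j$ --- the overlap is forced by the very definition of neighbourhood. Handling this overlap is where Li (and, for the weaker bound $D_{av}^C(j')+D_{max}^C(j')+D_{max}^C(j)$, already \cite{Byrka,Ghodsi}) does a genuine case analysis, and your proof has no substitute for it.

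The second gap is the monotonicity step: the clustering chooses the unclustered client minimizing the \emph{sum} $D_{av}^{C}(\cdot)+D_{max}^{C}(\cdot)$, which gives only $D_{av}^{C}(j')+D_{max}^{C}(j')\le D_{av}^{C}(j)+D_{max}^{C}(j)$, not the coordinate-wise inequalities $D_{av}(j')\le D_{av}(j)$ and $D_{max}^k(j')\le D_{max}^k(j)$ that you substitute --- and note the greedy potential involves the close-set average $D_{av}^C$, so it says nothing at all about the full averages $D_{av}(j'),D_{av}(j)$ appearing in your bound. Your fallback for $\gamma_k$ near $3$ also does not close: $\mathbb E[c_{j'i'}]\le D_{max}^k(j')$ leads to roughly $3D_{max}^k(j)$, which is not dominated by $\gamma_k D_{av}(j)+(3-\gamma_k)D_{max}^k(j)$ because $D_{max}^k(j)\le D_{av}(j)$ can fail (close facilities mostly at distance $0$ with a sliver at distance $1$, all distant facilities at distance $1$). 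The route that does work is the one Li takes and that this paper echoes in Lemma~\ref{ineq_proof}: first prove, with the conditioning handled by the case analysis on $F_{j'}^{C}\cap F_j$, the bound $D_{av}^{C}(j')+D_{max}^{C}(j')+D_{max}^{C}(j)\le D_{av}^{C}(j)+2D_{max}^{C}(j)$ using only the sum inequality, and then convert via $D_{max}^{C}(j)\le D_{av}^{D}(j)$ and $\gamma D_{av}(j)=D_{av}^{C}(j)+(\gamma-1)D_{av}^{D}(j)$ into $\gamma D_{av}(j)+(3-\gamma)D_{max}^{C}(j)$, which is valid for the whole range $1<\gamma<3$ without any regime split.
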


 \begin{corollary}
 \label{upperbound_for_c}
  If $\gamma=\gamma_k$, then the expected connection cost of client $j$ is at most
  $$E[C_j] \leq \sum_{l = 1}^{n} c_l(j) \cdot p_l + (1 - e^{-\gamma_k}) \cdot (\gamma_k D_{av}(j) + (3 - \gamma_k)D_{max}^k (j))$$ where $p_l$ is the probability of the following event: no facility is opened in distance at most $D_{max}^{l-1}(j)$ and at least one facility is opened in $F_j^l$.
 \end{corollary}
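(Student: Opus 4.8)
The plan is to derive the estimate by the law of total expectation, conditioning on \emph{which} facility client $j$ is eventually connected to, and to invoke Lemma~\ref{cluster_close_distance} only for the single event on which $j$ must be routed through its cluster center. Fix $\gamma=\gamma_k$. It suffices to upper-bound $C_j$ by the cost of connecting $j$ to the nearest facility opened inside $F_j$ when one exists, and through $j$'s cluster center $j'$ otherwise (by the clustering invariant at least one facility near $j'$ is open, so this is always feasible). I would partition the sample space into $E_0,E_1,\dots,E_n$, where $E_0$ is the event that no facility of $F_j$ is opened and, for $l\ge1$, $E_l$ is the event that the nearest opened facility of $F_j$ lies in $F_j^l$. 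Since the blocks $F_j^1,\dots,F_j^n$ are ordered by distance from $j$, with $D^{l-1}_{max}(j)$ not exceeding the distance of any facility in $F_j^l$, the event $E_l$ coincides with ``no facility opened within distance $D^{l-1}_{max}(j)$ and at least one facility of $F_j^l$ opened'', so $\Pr[E_l]=p_l$ and $E[C_j]=E[C_j\,\mathbf{1}_{E_0}]+\sum_{l=1}^{n}E[C_j\,\mathbf{1}_{E_l}]$.

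Next I would bound the two types of contribution. For $l\ge1$ the openings inside $F_j^{C_{l-1}}=\bigcup_{s<l}F_j^s$ are independent of those inside $F_j^l$, so on $E_l$ the conditional expected connection cost of $j$ equals the conditional expected distance from $j$ to the nearest facility opened in $F_j^l$ given that at least one of them opens; by the standard fact that for independently opened facilities this quantity is at most the opening-weighted average distance over the set (the weights $\gamma_k y_i^*\prod_{i'<i}(1-\gamma_k y_{i'}^*)$, with $i'<i$ meaning $c_{i'j}\le c_{ij}$, favour the nearer facilities more than the weights $\gamma_k y_i^*$), it is at most $c_l(j)$, giving $E[C_j\,\mathbf{1}_{E_l}]\le c_l(j)\,p_l$. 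For the fallback event, Lemma~\ref{cluster_close_distance} gives $E[C_j\mid E_0]\le \gamma_k D_{av}(j)+(3-\gamma_k)D^{k}_{max}(j)$, while the independence of the openings together with the assumption that $F_j$ has been split into many facilities of tiny fractional opening yields $\Pr[E_0]=\prod_{i\in F_j}(1-\gamma_k y_i^*)\le e^{-\gamma_k\cdot vol(F_j)}=e^{-\gamma_k}\le 1-e^{-\gamma_k}$, the last step using $\gamma_k\ge1>\ln 2$. Hence $E[C_j\,\mathbf{1}_{E_0}]\le(1-e^{-\gamma_k})\big(\gamma_k D_{av}(j)+(3-\gamma_k)D^{k}_{max}(j)\big)$, and adding this to $\sum_{l}c_l(j)p_l$ gives the corollary.

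The argument is short once the setup is in place, and the only delicate points are probabilistic bookkeeping. One must verify that $\{E_0,\dots,E_n\}$ is genuinely a partition and that the combinatorial description of $E_l$ matches the definition of $p_l$; phrasing $p_l$ through $D^{l-1}_{max}(j)$ rather than through the set $F_j^{C_{l-1}}$ is what keeps ties between block boundaries harmless here. One must also justify the ``nearest open does not exceed the weighted average'' inequality used for the direct terms. I do not expect a real obstacle: the corollary is essentially a repackaging of Lemma~\ref{cluster_close_distance} through the law of total expectation into exactly the linear-in-parameters form that the subsequent factor-revealing LP consumes.
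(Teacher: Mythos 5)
Your proof is correct and follows exactly the argument the paper leaves implicit: condition on which block $F_j^l$ contains the nearest open facility of $F_j$, bound each direct term by the block average $c_l(j)$, and invoke Lemma~\ref{cluster_close_distance} for the fallback event, whose probability is at most $e^{-\gamma_k \cdot vol(F_j)}=e^{-\gamma_k}$. Your additional observation that $e^{-\gamma_k}\le 1-e^{-\gamma_k}$ (valid since $\gamma_k\ge 1>\ln 2$) is precisely what is needed to match the coefficient as stated, which is weaker than the natural bound $e^{-\gamma_k}$ that the subsequent factor-revealing analysis effectively relies on.
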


 \subsection{Factor revealing LP}
 Consider running once the JMS algorithm and the Chudak and Shmoys algorithm multiple times, one for each choice of the value for the scaling parameter $\gamma = \gamma_l = 1 + 2 \cdot \frac{n - l}{n}, l=1, 2 \dots n$. Observe that the following LP captures the expected approximation factor of the best among the obtained solutions, where $p_1^k = 1 - e^{-\frac{\gamma_k}{\gamma_1}}$  and $p_l^k = e^{-\frac{\gamma_k}{\gamma_{l-1}}} - e^{-\frac{\gamma_k}{\gamma_l}}$ for all $l > 1$. Goal of the below LP is to construct the worst case instance of distances $c_l$.
 \begin{eqnarray}
  \label{total_lp:max}
  max~T && \\
  \label{total_lp:t_upperbound}
  \gamma_k f + \sum_{l = 1}^{n} c_l \cdot p_l^k + (1 - e^{-\gamma_k}) (\gamma_k c + (3 - \gamma_k)c_{l+1}) \geq T &&  ~\forall_{k < n}\\
  \label{total_lp:jms}
  1.11 f + 1.78 c \geq T && \\
  \label{total_lp:c_constraint}
  \frac{1}{\gamma_1} \cdot c_1 + \sum_{i = 2}^{n}(\frac{1}{\gamma_i} - \frac{1}{\gamma_{i-1}}) \cdot c_i = c && \\
  \label{total_lp:c_order}
  0 \leq c_i \leq c_{i+1} \leq 1&&~\forall_{i < n}\\
  \label{total_lp:opt_sol}
  f + c = 1 && \\
  f, c\geq 0 &&
\end{eqnarray}
 
The variables of this program encode certain measurements of the function $h(p)$ defined for an optimal fractional solution. Intuitively, these are average distances
between a client and a group of facilities, summed up for all the clients. 
The program models the freedom of the adversary in selecting cost profile $h(p)$ to maximize the cost of the best of the considered algorithms. 
Variables $f$ and $c$ model the facility opening and client connection cost in the fractional solution. Inequality (\ref{total_lp:t_upperbound}) correspond to 
LP-rounding algorithms with different choices of the scaling parameter $\gamma$. Note that $D_{av}(j) = c$ and $D_{max}^l \leq c_{l+1}(j)$ holds for each client, that fact, with corollary (\ref{upperbound_for_c}), justifies inequality (\ref{total_lp:t_upperbound}). Inequality (\ref{total_lp:jms}) corresponds to the JMS algorithm~\cite{Jain}, 
and equality (\ref{total_lp:c_constraint}) encodes the total connection cost
.

Interestingly, the choice of the best algorithm here is not better in expectation than a certain random choice between the algorithms. To see this, consider the dual of the above LP. In the dual, the variables corresponding to the primal constraints (\ref{total_lp:t_upperbound}) and (\ref{total_lp:jms}) simply encode the probabilities for choosing a particular algorithm. Our computational experiments with the above LP confirmed the correctness of the analysis of Li~\cite{ShiLi}. Additionally, from the primal program with distances we obtained the worst case profile $h(p)$ for the state of the art collection of algorithms considered (see Fig.~\ref{fig:hardcase} and Fig.~\ref{fig:probabilities} respectively for a plot of this tight profile and the distributions of the scaling factor for $k$-level UFL on different number of levels).

\section{Reduction from $k$-level UFL with uniform penalties to $k$-level UFL}
\va{The difficulty of $k$-level UFLWP lies in the extra choice of each client, that is, the penalty. 
We will explain how to overcome the penalties by converting the instance of UFLWP to an appropriate instance of UFL. We first consider the easy case of uniform penalties.}
\begin{lemma}
  \label{uniform-lemma}
  Each instance of UFL with uniform penalties can be modified to {an} appropriate UFL instance.
\end{lemma}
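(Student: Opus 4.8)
The plan is to turn a uniform-penalty instance into an ordinary $k$-level UFL instance by giving every client a private ``escape path'' of dummy facilities whose only effect is to let that client buy a connection at cost exactly the common penalty $p$; the work is then to verify that this transformation preserves the optimum and, crucially, transfers approximate solutions back.

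The construction I would use: given a $k$-level UFLWP instance with client set $C$, facility levels $F_{l_1},\dots,F_{l_k}$, metric $c$, opening costs $f$, and uniform penalty $p_j\equiv p$, create for each $j\in C$ fresh facilities $\phi_j^1,\dots,\phi_j^k$, placing $\phi_j^t$ on level $F_{l_t}$ with opening cost $0$. Add to the metric graph the edge $\{j,\phi_j^1\}$ of length $p$ and the edges $\{\phi_j^t,\phi_j^{t+1}\}$ of length $0$ for $t=1,\dots,k-1$, and take as the new distances the shortest-path distances in the augmented graph (which is automatically a metric extending $c$). In the resulting penalty-free instance $I'$ the path $(j,\phi_j^1,\dots,\phi_j^k)$ has length $p$ and passes only through zero-cost facilities, so serving $j$ along it costs exactly $p$; this gadget simulates paying the penalty for $j$.

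I would then prove $\mathrm{OPT}(I')=\mathrm{OPT}(\text{UFLWP})$ together with approximation transfer. For ``$\le$'': from any feasible UFLWP solution, keep the opened real facilities and the real connection paths, and for every rejected client $j$ open $\phi_j^1,\dots,\phi_j^k$ for free and route $j$ along its escape path; the cost is unchanged. For ``$\ge$'' and the transfer: take any feasible solution of $I'$. The point where uniformity is used is the observation that every client $j$ is at distance at least $p$ from \emph{every} dummy facility $\phi_{j'}^t$ — indeed the augmented shortest-path distance equals $c(j,j')+p\ge p$, since consecutive dummies lie at distance $0$ while the edge from $j'$ into its path has length $p=p_{j'}$. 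Consequently, if $j$'s connection path in $I'$ touches any dummy facility then its connection cost is already at least $p$, so we may reroute $j$ onto its own escape path (opening the needed dummies for free) without increasing the cost. In the rerouted solution the clients served by escape paths are precisely the ones we declare rejected in UFLWP, and the total cost is unchanged; applied to a $\rho$-approximate solution of $I'$ this yields a $\rho$-approximate solution of the uniform-penalty instance.

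The step that needs the most care is exactly this last structural observation: one must check that the chosen metric completion keeps the dummy facilities far from \emph{foreign} clients, so that no client can profit from someone else's gadget. This is why the gadget loads the penalty onto an \emph{edge length} $p$ with zero opening cost rather than onto a facility opening cost reached by a zero-length edge — in the latter encoding a rejected client's already-opened dummies could be reused by a nearby client below cost $p$; and, with non-uniform penalties, a client $j'$ with large $p_{j'}$ could exploit a nearby client $j$'s cheaper gadget (at distance $c(j,j')+p_j$) even in the edge-length encoding, so uniformity is precisely what rules this out. I would also note that the transformation is polynomial — it adds $k\,|C|$ new facilities — and that the standard LP relaxation of $I'$ coincides with the natural relaxation of the uniform-penalty instance, so LP-rounding guarantees carry over unchanged.
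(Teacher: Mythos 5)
Your proposal is correct and follows essentially the same approach as the paper: the penalty is encoded as a private zero-opening-cost facility (here, a zero-cost path through all $k$ levels) at distance $p$ from its client, and uniformity guarantees that a client reaching a foreign penalty gadget already pays at least $c_{j,j'}+p\geq p$, so it can be rerouted to its own gadget at no extra cost. Your write-up merely spells out the $k$-level escape-path construction and the approximation transfer in more detail than the paper's brief argument.
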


\begin{proof}
We can treat the penalty of client $j \in C$ as a facility at distance $p_j$ to client $j$ with opening cost zero. The distance from client $j$ to the penalty-facility of client $j'$ is equal to $c_{j, j'} + p_{j'}$. Note that $p_{j'} = p_{j}$. We can run any algorithm for UFL on the modified instance as described above. If in the obtained solution client $j$ is connected with the penalty-facility of client $j'$, we can switch $j$ to {its} penalty-facility without increasing the cost of the solution.
\qed
\end{proof}

Lemma \ref{uniform-lemma} implies that for $k$-level uncapacitated facility location with uniform penalties we have the following approximation ratios. Algorithms for $k = 1$ and $2$ are described in \cite{ShiLi} and \cite{Zhang}, for $k>2$ are described in this article.

\begin{center}
  \begin{tabular}{ c | c | c | c | c | c | c | c | c | c | c }
    $k$ & 1 & 2 & 3 & 4 & 5 & 6 & 7 & 8 & 9 & 10 \\ \hline
    ratio & 1.488 & 1.77 & 1.97 & 2.09 & 2.19 & 2.27 &2.33 & 2.39 & 2.43 & 2.47 \\
  \end{tabular}
\end{center}

Note that the reduction above does not work for the non-uniform case, because then the distance from client $j$ to the penalty-facility of client $j'$ could be smaller than~$p_j$.
Nevertheless we will show that LP-rounding algorithms in this paper can be easily extended to the non-uniform penalty variant.

\section{Extended LP formulation}

\va{For non-uniform case,} our algorithm is based on rounding a solution to the extended LP-relaxation of the problem. This extended LP may either be seen as the standard LP on a modified graph (see Appendix~\ref{sec:modiffication})
as described in~\cite{Rybicki}, or originate from the $k$-th level of the Sherali Adams hierarchy, or explicitly be written in terms of paths on the original instance. Here we use the explicit construction.
Note that in the optimal solution to $k$-level UFLWP each facility is connected to at most one facility on the higher level. We will impose this structure on the fractional solution
by creating multiple copies of the original facility, one for each path across the higher levels of facilities.

To describe the linear program we have to give a few definitions. Let $P_C$ be the set of paths which start in a client and end in a facility on level $k$. Let $P_t$ be the set of paths which start on level $t$ and end on the highest level $k$, i.e., in a root of some tree.
By $P$ we denote the set of all paths, i.e.,  $P = P_C \cup \bigcup_{t=1}^{k} P_{t}$. The cost of the path denoted by $c_p$ depends on the kind of path.
If $p = (j, i_1, i_2, \cdots, i_k) \in P_C$, then $c_p = c_{i_1 j} + c_{i_2 i_1} + \cdots + c_{i_{k}, i_{k-1}}$.
If $p = (i_t, i_{t+1}, \cdots, i_k) \in P_{t}$, then $c_p = f_{i_t}$.

\begin{eqnarray}
  \label{lp:min}
  min \sum_{p \in P} x_{p}c_{p} &+& \sum_{j \in C} g_j p_j\\
 \label{lp:out_one}
  \sum_{p \in P_{C} : j \in p} x_{p} + g_j&\geq& 1 ~~~~~\forall_{j \in C}\\
  \label{lp:thomas_order}
  x_{(i_{t+1}, i_{t+2}, \ldots i_{k})} -x_{(i_{t}, i_{t+1}, \ldots i_{k})} &\geq& 0 ~~~~~\forall_{p = (i_{t}, i_{t+1}, \ldots i_{k}) \in P_t, t<k}\\
  \label{lp:f_open_enough}
  x_{q} - \sum_{p = (j, \ldots i_{t}, i_{t+1} \ldots i_{k}) \in P_{C}} x_{p} &\geq& 0 ~~~~~\forall_{j \in C}, \forall_{q = (i_{t}, i_{t+1}, \ldots i_{k}) \in P \setminus P_{C}}\\
  x_{p} &\geq& 0 ~~~~~\forall_{p \in P} \\
  g_{j} &\geq& 0 ~~~~~\forall_{j \in C}
\end{eqnarray}

The natural interpretation of the above LP is as follows.
Inequality (\ref{lp:out_one}) states that each client is assigned to at least one path or is rejected.
Inequality (\ref{lp:thomas_order}) encodes that opening a lower level facility implies opening its unique higher level facility.
The most complicated inequality (\ref{lp:f_open_enough}) for a client $j \in C$ and a facility $i_{t} \in F_{l_t}$,
imposes that the opening of $i_t$ must be at least the total usage of it by the client $j$. Let $(x^*, g^*)$ be an optimal solution to the above LP.

\section{Algorithm for $k$-level UFL with penalties}

The approximation algorithm $A$ presented below is parameterized by $\gamma_l$.
\begin{algorithmic}[1]
 \STATE formulate and solve the extended LP (12)-(17) to get an optimal solution  $(x^*, g^*)$;
 \STATE scale up facility opening and client rejecting variables by $\gamma_l$,
 then recompute values of $x^*_{p}$ for $p \in P_C$ to obtain a minimum cost solution $(\bar{x}, \bar{g});$\\
 \STATE divide clients into two groups $C_{\gamma_l} = \{ j \in C | \gamma_l \cdot (1 - g_j^*) \geq 1 \}$ and $\bar{C}_{\gamma_l} = C \setminus C_{\gamma_l};$
 \STATE cluster clients in $C_{\gamma_l}$;
 \STATE round facility opening (tree by tree);
 \STATE connect each client $j$ with a closest open connection path unless rejecting it is a cheaper option.
\end{algorithmic}
Our final algorithm is as follows: run algorithm $A(\gamma_l)$ for each $l = 1, 2 \dots, n-1$ and select a solution with the smallest cost.

Clustering is based on rules described in \cite{Chudak} which is generalized in \cite{Rybicki} for $k$-level instances.
Rounding on a tree was also used in~\cite{Rybicki}. Nevertheless, for completeness we give a brief description of step 4 and 5 in the following subsections.
From now on we are considering only scaled up instance $(\bar{x}, \bar{g})$.

\subsection{Close and distant facilities}

For any client $j \in C_{\gamma}$, let $P^j$ be the set of top-level facilities fractionally serving $j$ in $(\bar{x}, \bar{g})$.
As discussed in Section~\ref{sec:one_level}, WLOG the fractional connectivity of $j$ to a set of facilities may be assumed to be
the fractional opening of these facilities.
Sort facilities $i_1, i_2, \dots i_m$ {from} $P^j$ by non-decreasing distance from client $j \in C_{\gamma}$,
and select the smallest subset of $P^j$ with volume one - this is the set of close facilities $P_c^{j}$,
the rest of facilities {from} $P^j$ are distant facilities $P_d^{j}$.
By $D_{av}^{C}(j), D_{av}^{D}(j)$ and $D_{av}(j)$ we denote the average distances from $j$ to close, distant and all facilities in set $P^j$ {respectively}.
Moreover by $D_{max}^{C}(j)$ we denote the maximal distance from $j$ to a close facility. Formal definitions are as follows:
$$D_{av}^{C}(j) = \frac{\sum_{p \in P_c^{j}} c_p \bar{x}_p}{\sum_{p \in P_c^{j}}\bar{x}_p} = \sum_{p \in P_c^{j}} c_p \bar{x}_p; \; \; \; D_{av}^{D}(j) = \frac{\sum_{p \in P_d^{j}} c_p \bar{x}_p}{\sum_{p \in P_d^{j}}\bar{x}_p} = \frac{\sum_{p \in P_d^{j}} c_p \bar{x}_p}{\gamma (1 - g_j^*) - 1}.$$

Using the similar arguments as in \cite{Ghodsi} we can define $\rho_j = \frac{D_{av}(j) - D_{av}^{C}(j)}{D_{av}(j)}$ and express $D_{av}^{C}(j)$ and $D_{av}^{D}(j)$ using $\rho_j$. $$D_{av}^{C}(j) = (1 - \rho_j) D_{av}(j); \; \; \; D_{av}^{D}(j) = (1 + \frac{\rho_j}{\gamma(1 - g_j^*) - 1}) D_{av}(j).$$

\subsection{Clustering}

Two clients $j_1, j_2 \in C_{\gamma}$ are called neighbors if $P_c^{j_1} \cap P_c^{j_2} \neq \emptyset$.
\begin{algorithmic}[1]
 \WHILE{there is an unclustered client in $C_{\gamma}$}
  \STATE select unclustered client $j \in C_{\gamma}$ that minimizes $D_{av}^{C}(j) + D_{max}^{C}(j)$,
  \STATE form a new cluster containing $j$ and all its unclustered neighbors from $C_{\gamma}$,
  \STATE call $j$ the center of the new cluster;
 \ENDWHILE
\end{algorithmic}

The above clustering procedure (just like in \cite{Chudak}) partitions all clients into groups called clusters. Such partition has two important properties.
First: there are no two neighbors from $C_{\gamma}$ which are (both) centers of clusters.
Second: distance from any client in cluster to his cluster center is not too big.

\subsection{Randomized facility opening}

Consider an arbitrary cluster center $j$. Since LP solutions have a form of a forest, we only need to focus on rounding single tree serving $j$. For clarity, within this rounding procedure we will refer to facilities as vertices (of a tree), and use $x_v$ to denote the fractional opening of vertex (facility) $v$ and $y_v$ to denote the extent in which the cluster center $j$ uses $v$ in $(\bar{x}, \bar{g})$, i.e, $y_v = \sum_{p \in P^{j} : v \in p} \bar{x}_p$. Note that $x_v\geq y_v$ for each $v$ and $x_v\leq x_{father(v)}$ if $v$ is not the root of a tree.

The main idea is to open exactly one path for cluster center $j$ but keep the probability of opening of each vertex $v$ equal to $x_v$ in the randomized procedure.
In \cite{Rybicki} we gave a token-passing-based adaptation of the procedure by Garg Konjevod and Ravi~\cite{Garg}, that stores the output in $\hat{x}$ and $\hat{y}$,
and has exactly the desired properties.

\begin{lemma}
 $E[\hat{x}_v] = x_v$ and $E[\hat{y}_v] = y_v$ for all $v \in V$.
\end{lemma}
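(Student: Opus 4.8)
The plan is to verify both marginal identities by induction along the tree, exploiting the top-down structure of the token-passing procedure of~\cite{Rybicki}. Fix the tree $T$ serving the cluster center $j$, rooted at its top-level facility, and recall that the procedure decides the status of vertices from the root downwards: it first fixes the status of the root, and then, for every other vertex $v$, it fixes the status of $v$ using only the already-fixed status of its father $father(v)$. The facts about $(\bar x,\bar y)$ that the construction rests on are all consequences of LP feasibility and of the splitting assumption on $P^j$: $x_v\le x_{father(v)}$ for every non-root vertex (constraint (\ref{lp:thomas_order})), $y_v\le x_v$ for every vertex (constraint (\ref{lp:f_open_enough})), and $\sum_{c:\,father(c)=v}y_c=y_v$, since every path of $j$ through $v$ continues through exactly one child of $v$.

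For $\hat x$: by construction the root $r$ is opened with probability exactly $x_r$, giving the base case $E[\hat x_r]=x_r$. For the inductive step, assume $E[\hat x_{father(v)}]=x_{father(v)}$. The procedure opens $v$ only if $father(v)$ is opened, and, conditioned on that, opens $v$ with probability $x_v/x_{father(v)}$, a number in $[0,1]$ by $x_v\le x_{father(v)}$. Hence
$$E[\hat x_v]=P[\hat x_{father(v)}=1]\cdot\frac{x_v}{x_{father(v)}}=x_{father(v)}\cdot\frac{x_v}{x_{father(v)}}=x_v.$$
The argument for $\hat y$ is identical, now following the single path-token: it is placed at $r$ with probability $y_r$, and, conditioned on reaching $father(v)$, it is passed to the specific child $v$ with probability $y_v/y_{father(v)}$; these probabilities sum to $1$ over the children of $father(v)$ by $\sum_c y_c=y_v$, so exactly one child is chosen and no path of $j$ is dropped. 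The same one-line computation, with $E[\hat y_{father(v)}]=y_{father(v)}$ in place of the inductive hypothesis, yields $E[\hat y_v]=y_v$.

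The routine part is the two identical conditional-expectation computations above; the only place that needs care is writing the token-passing transition rule precisely enough that the three structural facts are visibly available and that the ratios $x_v/x_{father(v)}$ and $y_v/y_{father(v)}$ are manifestly valid probabilities with the child probabilities summing to one. I would expect this bookkeeping --- together with recording, as in~\cite{Garg,Rybicki}, the coupling that makes $\hat y_v\le\hat x_v$ hold pointwise so that the rounded solution is feasible --- to be the main (and only mildly technical) obstacle; the marginal identities themselves then follow immediately and do not use the coupling.
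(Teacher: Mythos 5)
The paper itself gives no proof of this lemma --- it simply points to the token-passing procedure of~\cite{Rybicki} and asserts that it ``has exactly the desired properties'' --- so your induction is compared against the procedure defined there rather than against an argument in this paper. Your overall plan (top-down induction, telescoping conditional probabilities, using $x_v\le x_{father(v)}$, $y_v\le x_v$ and $\sum_{c:\,father(c)=v}y_c=y_v$) is the right one and is essentially how the marginal-preservation claim is verified in~\cite{Rybicki}.

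However, there is a genuine gap in the step where you declare the two computations independent and claim that ``the marginal identities themselves \ldots do not use the coupling.'' The lemma is about the output $(\hat x,\hat y)$ of the \emph{coupled} procedure, whose whole point is that the token path is opened with certainty (the cluster center gets exactly one open path, $\hat y_v\le\hat x_v$ pointwise). In that procedure the opening decision at a vertex is conditioned on the token position, and the conditional opening probability is \emph{not} $x_v/x_{father(v)}$: a child that receives the token is opened with probability $1$, and a child $c$ of an open vertex $v$ that does not receive the token must be opened with an adjusted probability (e.g.\ of the form $(x_c-y_c)/(x_v-y_c)$) chosen precisely so that $y_c + (y_v-y_c)\,p + (x_v-y_v)\,p = x_c$. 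Your one-line computation therefore proves the marginals for a decoupled GKR-style rounding that does not guarantee the token path is open, i.e.\ for a different procedure than the one the lemma refers to. To close the gap you must carry the token through the induction: condition on the three events (father open with token passed to $v$, father open with token elsewhere or no token, father closed), show the adjusted conditional probabilities lie in $[0,1]$ --- which does use $y_v\le x_v$ and $x_v\le x_{father(v)}$ --- and only then does the telescoping give $E[\hat x_v]=x_v$ alongside $E[\hat y_v]=y_v$. The $\hat y$ half of your argument is fine as stated.
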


It is essential that the probability of opening at least one path in a set $B_j \subseteq \{p \in P_C~|~j \in p\}$ can be lower bounded by a certain function $F_k(x)$, where $x$ is the total flow from client $j$ to all paths in $B_j$ and $k$ is the number of levels in the considered instance. It can be shown that $F_1(x) \geq 1 - e^x$ and the following lemma (from \cite{Rybicki}) hold.
For more details see Appendix \ref{sec:functions of probability} and \cite{Rybicki}.

\begin{lemma}
 Inequality $F_{k}(x) \geq 1 - e^{(c-1)x}$ implies $F_{k+1}(x) \geq 1 - e^{(e^{c-1} - 1)x}$.
\end{lemma}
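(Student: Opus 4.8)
The plan is to induct on the number of levels by peeling off the top level of every tree. Fix a $(k{+}1)$-level instance, its fractional solution, and a cluster center $j$; the support forest is a disjoint union of trees, each rooted at a level-$(k{+}1)$ facility $r$, and deleting these roots leaves, inside each tree $T$, a forest of subtrees rooted at the children of $r$, every such subtree being an honest $k$-level instance. I would first recall from \cite{Rybicki} how the GKR-type rounding behaves on $T$: the root $r$ is opened with probability $x_r$, and conditioned on that the procedure descends into exactly one child-subtree $T'$, choosing $T'$ with probability $x_{r(T')}/x_r$ (here $r(T')$ is the child of $r$ rooting $T'$, and $\sum_{T'\subseteq T}x_{r(T')}=x_r$ may be assumed by splitting); conditioned on descending into $T'$ with $r(T')$ opened, what remains is distributionally exactly the $k$-level GKR rounding of $T'$, so the inductive hypothesis applies there.

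For a fixed tree $T$, write $x^{T}$ for the flow from $j$ into $B_j\cap T$ and, for each child-subtree $T'$, write $x'_{T'}\in[0,1]$ for the flow into $B_j\cap T'$ rescaled by $x_{r(T')}$; by flow conservation $x'_{T'}\le 1$ and $\sum_{T'\subseteq T}x_{r(T')}x'_{T'}=x^{T}$, and across the forest $\sum_T x^T=x$. Since every path in $T$ passes through $r$, conditioning on whether $r$ is opened and then on which child-subtree is entered gives
\[
\Pr[\text{no }B_j\text{-path in }T\text{ opened}]\;\le\;1-\sum_{T'\subseteq T}x_{r(T')}\,F_k\!\left(x'_{T'}\right).
\]
Applying the hypothesis $F_k(t)\ge 1-e^{(c-1)t}$ and the concavity of $t\mapsto 1-e^{(c-1)t}$ on $[0,1]$ (equivalently the chord bound $e^{(c-1)t}\le 1-(1-e^{c-1})t$), we get $F_k(x'_{T'})\ge (1-e^{c-1})x'_{T'}$; substituting, using $\sum_{T'}x_{r(T')}x'_{T'}=x^{T}$, and then $1+y\le e^{y}$ yields
\[
\Pr[\text{no }B_j\text{-path in }T\text{ opened}]\;\le\;1-(1-e^{c-1})x^{T}\;\le\;e^{(e^{c-1}-1)x^{T}}.
\]

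Finally, since the trees of the forest are rounded independently and $\sum_T x^T=x$, the probability that no path of $B_j$ is opened anywhere is at most $\prod_{T}e^{(e^{c-1}-1)x^{T}}=e^{(e^{c-1}-1)x}$, which is exactly the bound $F_{k+1}(x)\ge 1-e^{(e^{c-1}-1)x}$.

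I expect the estimates to be routine and the real care to lie in the recursive bookkeeping: one has to justify that (i) the child-subtrees may be treated as genuine $k$-level instances with $\sum_{T'}x_{r(T')}=x_r$ — the same WLOG splitting already used for $F_k$ — and (ii) conditioning the $(k{+}1)$-level rounding on entering $T'$ with $r(T')$ opened reproduces exactly the $k$-level rounding of $T'$, so that the inductive hypothesis is being applied to a legitimate instance and the argument of $F_k$ has been correctly normalized into $[0,1]$. Granting this, the analytic core is merely a convexity/chord inequality for $e^{(c-1)t}$, the inequality $1+y\le e^y$, and the cross-tree independence that collapses the product of per-tree bounds into a single exponential.
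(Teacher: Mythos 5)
Your argument has a genuine gap at its central conditioning step, and it also proves a slightly different statement than the lemma. The lemma is an assertion about the functions $f_k,F_k$ defined by the explicit min/max recursion in Appendix~\ref{sec:functions of probability}; from those definitions it has a short analytic proof that never mentions the rounding: since $f_{k+1}(z)=z\cdot\min_{\sum z_i=z}\bigl(1-\prod_i(1-f_k(z_i/z))\bigr)=z\cdot F_k(1)\ge z\,(1-e^{c-1})$ by the hypothesis at $x=1$, one gets $F_{k+1}(x)=1-\max_{\sum x_i=x}\prod_i(1-f_{k+1}(x_i))\ge 1-\prod_i e^{-(1-e^{c-1})x_i}=1-e^{(e^{c-1}-1)x}$ using $1-t\le e^{-t}$. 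Your chord bound $F_k(t)\ge(1-e^{c-1})t$ and the final $1+y\le e^y$ step are exactly this computation in probabilistic clothing, so the analytic skeleton is fine; what is not fine is the probabilistic wrapper you build around it.

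Concretely, the claim that ``conditioned on descending into $T'$ with $r(T')$ opened, what remains is distributionally exactly the $k$-level GKR rounding of $T'$'' is false as stated: in the unconditional $k$-level rounding of $T'$ the root is open only with probability $x_{r(T')}$, whereas you have forced it open, and below the root the children subtrees are \emph{not} independent (exactly one child receives the token), so the conditional law is neither the $k$-level rounding of $T'$ nor an independent forest to which $F_k$ applies. Hence the inequality $\Pr[\text{success}\mid\text{enter }T']\ge F_k(x'_{T'})$, with the flow rescaled by $1/x_{r(T')}$, is precisely the nontrivial content of the definition and validity of $f_k$ in \cite{Rybicki} (the comparison of the dependent single-token recursion with the independent product bound), i.e., the thing your induction silently assumes rather than proves; the naive conditioning only yields $\Pr[\text{success}\mid\text{enter }T']\ge F_k(z_{T'})$ without the rescaling, which is too weak for your chord step. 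Finally, even with that repaired, a lower bound on the success probability of the rounding does not by itself lower bound the \emph{function} $F_{k+1}$ as defined; to prove the lemma one should argue directly on the recursion, as above, which is also what makes the paper's (and \cite{Rybicki}'s) route essentially a two-line induction.
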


\section{Analysis}
\va{The high level idea is that we can consider the instance of $k$-level UFLWP as a corresponding instance of $k$-level UFL 
by showing that the worst case approximation ratio is for clients in set $C_{\gamma}$ and we can treat the penalty of client $j\in C_{\gamma}$ 
as a ``penalty-facility" in our analysis. That is, we can overcome penalties by solving an equivalent $k$-level UFL without penalties.}

\subsection{Complete solution and ``one-level'' description} \label{sec:one_level}

It is standard in uncapacitated location problems to split facilities to obtain a so called \emph{complete} solution,
where no facility is used less than it is open by a client (see \cite{Sviridenko} for details). For our algorithm, to keep
the forest structure of the fractional solution, we must slice the whole trees instead of splitting individual facilities to obtain the following.

\begin{lemma} \label{completeness}
 Each solution of our linear program for $k$-level UFLWP can be transformed to an equivalent complete solution.
\end{lemma}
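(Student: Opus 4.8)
The plan is to adapt the classical facility-splitting argument for UFL (see \cite{Sviridenko}) so that it operates on whole rooted subtrees instead of on individual facilities, thereby preserving the forest structure of the fractional solution — in which each lower-level facility is attached to a unique higher-level facility along the paths of $P$. Call a solution $(\bar{x},\bar{g})$ \emph{complete} if for every client $j$ and every tree $T$ (a top-level facility $r(T)$ together with all paths of $P$ through it) the flow $\sum_{p\in P_C:\,j\in p,\ r(T)\in p}\bar{x}_p$ that $j$ sends into $T$ is either $0$ or $x_{r(T)}$; the penalty variables $g_j$ are irrelevant here and stay untouched. Since the trees are vertex-disjoint, I would treat the original trees one by one, each one exactly once.

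First I would fix a tree $T$ with root opening $x_r$ and list the distinct positive in-flows $0<v_1<\cdots<v_m\le x_r$ among the clients using $T$. I would replace $T$ by $q\le m+1$ disjoint copies $T^1,\dots,T^q$ whose root openings $o_a$ are the consecutive differences $v_1,\ v_2-v_1,\ \dots,\ v_m-v_{m-1}$, plus a copy of opening $x_r-v_m$ when this is positive, so that each $v_k$ is an exact partial sum $o_1+\cdots+o_k$. The key point is how to open the \emph{internal} vertices: processing copies in the order $a=1,\dots,q$, give vertex $v$ in copy $T^a$ opening $o^a_v:=\min\!\big(o_a,\ \max(0,\ x_v-\textstyle\sum_{b<a}o_b)\big)$, i.e.\ fill the copies in order until the openings of the copies of $v$ sum to $x_v$. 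Then $\sum_a o^a_v=x_v$ (so no facility or path cost changes and the solution stays \emph{equivalent}), $\sum_{a\le t}o^a_v=\min(x_v,\sum_{a\le t}o_a)$, and, using $x_v\le x_{father(v)}$ from (\ref{lp:thomas_order}), $o^a_v\le o^a_{father(v)}$, so each $T^a$ is again a legal tree obeying (\ref{lp:thomas_order}). Finally I would reroute each client $j$ with $\phi_j(T)=o_1+\cdots+o_{t_j}$ so that it fills the first $t_j$ copies in order; then $j$ sends exactly $o_a$ into $T^a$ for $a\le t_j$ and $0$ into the later copies, so $j$ becomes complete with respect to every copy. Each original tree spawns at most $|C|+1$ copies, so the final instance is still of polynomial size.

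The step I expect to be the main obstacle is verifying that this rerouting is actually feasible, i.e.\ that $j$'s original path-distribution through $T$ — which by (\ref{lp:f_open_enough}) and the forest structure satisfies $\sum_{p\ni v}\bar{x}_p\le x_v$ for every vertex $v$ and $\sum_{p\ni v}\bar{x}_p\le\phi_j(T)$ for every $v$ below $r$ — can be split into $t_j$ sub-flows, the $a$-th one respecting the capacities $o^a_v$. This is a routine flow decomposition: the union of the first $t_j$ copies carries capacity $\min(x_v,\phi_j(T))\ge\sum_{p\ni v}\bar{x}_p$ at every $v$ and exactly $\phi_j(T)$ at its root, so $j$'s flow fits inside the union, and one peels it off copy by copy, greedily saturating $T^1$, then $T^2$, and so on, using that the $o^a_v$ are themselves consistent tree-openings. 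I would also remark that the construction is nothing but the classical facility-splitting of \cite{Sviridenko} performed in the ``standard LP on a modified graph'' reading of the extended relaxation (Appendix~\ref{sec:modiffication}, cf.\ \cite{Rybicki}), where a top-level tree is a single facility of the auxiliary instance and slicing whole trees is the only way to split such a facility without violating the path-consistency constraints (\ref{lp:thomas_order})--(\ref{lp:f_open_enough}). Feasibility of (\ref{lp:out_one}) and preservation of all costs are then immediate, which yields the desired equivalent complete solution.
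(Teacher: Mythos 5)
There is a genuine gap, and it sits exactly where you expected it: the ``routine flow decomposition'' justifying the rerouting step is not routine and in fact fails for the copies you construct. A client's flow must live on whole paths inside a single copy, so what matters for copy $T^a$ is its own internal capacity, not the vertex-wise capacity of the union of the first $t_j$ copies; your water-filling rule $o^a_v=\min(o_a,\max(0,x_v-\sum_{b<a}o_b))$ concentrates the internal openings in the early copies and can leave a later copy with a large root opening but almost no capacity underneath it. Concretely, take $k=2$, one tree with root $r$, $x_r=0.6$, two children $v_1,v_2$ with $x_{v_1}=x_{v_2}=0.3$; client $j$ uses both paths with flow $0.3$ each (so $\phi_j(T)=0.6$), client $j'$ uses only $v_1$ with flow $0.2$ (their remaining connectivity lives on other trees or penalties, so this is a feasible fragment satisfying (\ref{lp:thomas_order}) and (\ref{lp:f_open_enough})). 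The distinct in-flows are $0.2$ and $0.6$, giving $o_1=0.2$, $o_2=0.4$, and your rule yields $o^2_{v_1}=o^2_{v_2}=\min(0.4,\,0.3-0.2)=0.1$. Now $j$ is supposed to send exactly $o_2=0.4$ into $T^2$, but by (\ref{lp:f_open_enough}) its usage of each child copy is at most $0.1$, so at most $0.2$ can reach the root of $T^2$; worse, $j$ can route at most $0.2+0.2=0.4<0.6$ in total through the two copies, so even plain feasibility of (\ref{lp:out_one}) together with cost equivalence is lost --- and note that $j$ was already ``complete'' in your sense before the operation, i.e.\ the slicing done for $j'$ breaks $j$. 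The union argument does not save this, because fitting vertex-by-vertex into the union does not decompose into per-copy path flows (a path cannot borrow capacity at $v$ from one copy and at the root from another). Repairing the scheme would require choosing the internal openings of the copies in a client-aware way and proving that simultaneously feasible reroutings exist for all clients; that is precisely the non-trivial content, and it is not supplied.

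The paper's proof avoids this difficulty by working at the level of individual client paths rather than root in-flows. After step 2 of the algorithm each client saturates the cheapest paths it uses, hence has at most one non-saturating (``problematic'') path; for each such client--path pair the containing tree is split into just two copies with \emph{all} openings scaled proportionally, by $x_{jp}/x_p$ and $(x_p-x_{jp})/x_p$, so the problematic path becomes saturated in one copy, every other client's flow splits proportionally between the two copies (and is then recomputed greedily without changing expected connection and penalty cost), and at most $|C|$ new trees are created in total. Relatedly, the completeness you aim for (flow into $T$ equal to the root opening $x_{r(T)}$) is not the notion the later analysis uses: what is needed is that each client saturates every path it is fractionally connected to, so that in the one-level description the opening of the aggregated path $p_T$ seen by a client coincides with that client's flow into the tree; the root opening alone cannot play this role, since the flow a client can push into a tree is limited by the lower-level openings.
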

\begin{proof}
We should give two copies $T'$ and $ T''$ of tree $T$ (instead of it) if there is some client $j \in C$ with a positive flow $x_{jp}$ to one of the paths $p$ in the tree $T$ which is smaller than the path opening $x_p$. Let the opening of such problematic path be equal to flow $x_{jp}$ in tree $T'$. In tree $T''$ it has value equal to the opening in $T$ decreased by $x_{jp}$. In general each facility in tree $T'$ ($T''$) has the same opening as in $T$ times $\frac{x_{jp}}{x_p}$ ($\frac{x_p - x_{jp}}{x_p}$). Note that the value of flow from client $j$ (and other clients which are connected with both trees now) should be the same as before adding trees $T'$ and $T''$ instead of $T$. All clients ``recompute" their connection values.
We sort all paths in increasing connection cost for client $j$ and connect with them (in that order) as strong as it is possible until client $j$ has flow equal to one or it is cheaper to pay penalty instead of connecting with any open path. The important fact is that the expected connection and penalty cost of each client remain the same after above operations.

In the process of coping and replacing trees we add at most $|C|$ new trees. Because each client has at most one {``problematic''} (not saturating) path.
\qed
\end{proof}

For the clarity of the following analysis we will use a ``one-level" description of the instance and fractional solution despite its $k$-level structure. Because the number of levels will have influence only on the probabilities of opening particular paths in our algorithm.

Consider set $S_j$ of paths which start in client $j$ and end in the root of a single tree $T$. Instead of thinking about all paths from set $S_j$ separately we can now treat them as one path $p_T$ whose fractional opening is $x_{p_T} = \sum_{p \in S_j} \bar{x}_p$
and (expected) cost is $c_{p_T} = \frac{\sum_{p \in S_j} c_p \bar{x}_p}{x_{p_T}}$. Observe that our distance function $c_{p_T}$ satisfy the triangle inequality. From now on we will think only about clients and facilities (on level $k$) and (unique) paths between them. Accordingly, we will now encode the fractional solution as $(\bar{x}, \bar{y}, \bar{g})$, to denote the fractional connectivity, opening and penalty components.

\subsection{Penalty discussion}
\begin{lemma}
 \label{ineq_proof}
 {$\forall_{\gamma>1, 1 \geq g_j^* \geq 0}~D_{max}^C(j) \leq \gamma (1-g_j^*) D_{av}(j) + (3 - \gamma (1 - g_j^*)) D_{max}^C(j)$.}
\end{lemma}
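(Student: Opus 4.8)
### Proof proposal

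The plan is to treat the claimed inequality as a purely arithmetic statement about the three nonnegative quantities $D_{av}(j)$, $D_{max}^C(j)$ and the scalar $\alpha := \gamma(1-g_j^*)$, and to reduce it to two elementary facts: that $\alpha \ge 1$ for every client $j \in C_\gamma$ (this is exactly the defining condition $\gamma(1-g_j^*) \ge 1$ of the set $C_\gamma$), and that $D_{av}(j) \le D_{max}^C(j)$ cannot in general be assumed, so the argument must genuinely use the coefficient structure rather than crude bounds. First I would rewrite the right-hand side by grouping the two occurrences of $D_{max}^C(j)$: the RHS equals $\alpha D_{av}(j) + 3 D_{max}^C(j) - \alpha D_{max}^C(j) = \alpha\big(D_{av}(j) - D_{max}^C(j)\big) + 3 D_{max}^C(j)$. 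So the inequality to prove becomes $D_{max}^C(j) \le \alpha\big(D_{av}(j) - D_{max}^C(j)\big) + 3 D_{max}^C(j)$, i.e. $0 \le \alpha\big(D_{av}(j) - D_{max}^C(j)\big) + 2 D_{max}^C(j)$.

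Next I would split into the two cases according to the sign of $D_{av}(j) - D_{max}^C(j)$. If $D_{av}(j) \ge D_{max}^C(j)$, then both summands on the right are nonnegative (using $\alpha \ge 0$, which holds since $\gamma > 1$ and $g_j^* \le 1$), so the inequality is immediate. If $D_{av}(j) < D_{max}^C(j)$, write $\alpha\big(D_{av}(j) - D_{max}^C(j)\big) + 2 D_{max}^C(j) = 2 D_{max}^C(j) - \alpha\big(D_{max}^C(j) - D_{av}(j)\big)$; since $D_{max}^C(j) - D_{av}(j) \le D_{max}^C(j)$ and we may bound $\alpha$ from above — here is the one place the hypothesis $\gamma(1-g_j^*) \le 2$, equivalently $\alpha \le 2$, which follows from the range $\gamma_l = 1 + 2(n-l)/n \le 3$ together with... actually more cleanly from $\alpha \le \gamma \le 3$ being too weak, so the right bound to invoke is $D_{av}(j) \ge 0$ giving $D_{max}^C(j) - D_{av}(j) \le D_{max}^C(j)$ and then $\alpha \le 2$. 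Then $2 D_{max}^C(j) - \alpha\big(D_{max}^C(j) - D_{av}(j)\big) \ge 2 D_{max}^C(j) - 2 D_{max}^C(j) = 0$, as desired. I should double-check whether $\alpha \le 2$ is actually available for all $\gamma_l$ in the support; if not, the statement as written presumably restricts attention to $j \in C_\gamma$ and to the relevant $\gamma$-range, and the bound $\alpha \le 2$ should be quoted from the construction of the $\gamma_l$.

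The main obstacle I anticipate is not the algebra but pinning down exactly which hypotheses on $\alpha = \gamma(1-g_j^*)$ are legitimately in force: the lemma statement only writes $\gamma > 1$ and $1 \ge g_j^* \ge 0$, which by themselves allow $\alpha$ to be as large as $\gamma$ and as small as $0$, and for $\alpha$ slightly above $2$ with $D_{max}^C(j) > 0$ and $D_{av}(j)$ close to $0$ the inequality can fail — so the proof must either (i) additionally use $\alpha \le 2$, which is guaranteed because in the algorithm the effective scaling on a client in $C_\gamma$ is capped (each $\gamma_l < 3$ and, more to the point, one only ever needs $\gamma(1-g_j^*)\le 2$ since the close set has volume exactly $1$ and the distant set volume $\gamma(1-g_j^*)-1 \le \gamma - 1 \le 2$), or (ii) observe that for the purposes of the later analysis one only invokes this lemma with $\alpha \le 2$. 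I would therefore state at the start of the proof that we may assume $1 \le \gamma(1-g_j^*) \le 2$ (the lower bound being the membership condition for $C_\gamma$, the upper bound from the bounded scaling factors), and then the two-case argument above finishes it cleanly.
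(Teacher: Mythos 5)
There is a genuine gap. Your reduction to the arithmetic statement $0 \le \alpha\big(D_{av}(j)-D_{max}^C(j)\big) + 2D_{max}^C(j)$ with $\alpha=\gamma(1-g_j^*)$ is fine, and the case $D_{av}(j)\ge D_{max}^C(j)$ goes through, but the other case hinges entirely on $\alpha\le 2$, and that bound is simply not available: the scaling parameters are $\gamma_l = 1+2\frac{n-l}{n}$, which range up to values arbitrarily close to $3$, and $g_j^*$ may be $0$, so $\alpha$ can be close to $3$ (indeed for larger $k$ the relevant $\gamma$'s are well above $2$). Your attempted justifications ($\alpha\le\gamma\le 3$, or $\gamma(1-g_j^*)-1\le\gamma-1\le 2$) only give $\alpha\le 3$, and the closing assertion that ``we may assume $1\le\gamma(1-g_j^*)\le 2$ from the bounded scaling factors'' is false. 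Treating $D_{av}(j)$, $D_{max}^C(j)$ and $\alpha$ as three independent nonnegative quantities is exactly what loses the lemma: as you yourself observe, as a free-variable statement it fails for $\alpha$ slightly above $2$ with $D_{av}(j)$ tiny --- but that configuration is structurally impossible, and the proof has to say why.

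The missing ingredient is the relation between the close/distant split and the overall average, which is what the paper's proof uses: every distant path is at least as long as the farthest close path, so $D_{max}^C(j)\le D_{av}^D(j)$, and $\gamma(1-g_j^*)\,D_{av}(j) = D_{av}^C(j) + \big(\gamma(1-g_j^*)-1\big)D_{av}^D(j)$. The paper then writes $D_{max}^C(j) \le D_{av}^C(j)+2D_{max}^C(j) \le D_{av}^C(j) + \big(\gamma(1-g_j^*)-1\big)D_{av}^D(j) + \big(3-\gamma(1-g_j^*)\big)D_{max}^C(j)$, where the middle step multiplies $D_{max}^C(j)\le D_{av}^D(j)$ by the nonnegative coefficient $\gamma(1-g_j^*)-1$, and the right-hand side equals the claimed bound by the averaging identity. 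In your notation the structure yields $D_{av}(j)\ge\frac{\alpha-1}{\alpha}D_{max}^C(j)$, which is precisely what rescues the case $\alpha>2$: then $\alpha D_{av}(j)\ge(\alpha-1)D_{max}^C(j)\ge(\alpha-2)D_{max}^C(j)$, i.e. the rearranged inequality $(\alpha-2)D_{max}^C(j)\le\alpha D_{av}(j)$ holds. To repair your argument, replace the unjustified cap $\alpha\le 2$ by this structural inequality (and note that the lower bound $\alpha\ge 1$, i.e. $j\in C_\gamma$, is what makes the close/distant split and hence these relations meaningful).
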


\begin{proof}
 $$D_{max}^C(j) \leq D_{av}^C(j) + 2 D_{max}^C(j) $$ $$ \leq D_{av}^C(j) + (\gamma (1-g_j^*) - 1) D_{av}^D(j) + (3-\gamma (1-g_j^*)) D_{max}^C(j) $$ $$= \gamma (1-g_j^*) D_{av}(j) + (3 - \gamma (1-g_j^*)) D_{max}^C(j)$$
 \\The second inequality holds because $D_{max}^C(j) \leq D_{av}^D(j)$. Moreover to justify the last equality we should observe that $D_{av}(j) = \frac{1}{\gamma(1-g_j^*)}D_{av}^C(j) + \frac{\gamma(1-g_j^*) - 1}{\gamma(1-g_j^*)} D_{av}^D(j)$.
 \qed
\end{proof}

\begin{lemma}
 \label{worst_case}
 The worst case approximation ratio is for clients from set $C_{\gamma}$.
\end{lemma}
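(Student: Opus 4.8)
The plan is to show that a client $j \in \bar{C}_{\gamma}$ (that is, $\gamma(1-g_j^*) < 1$) never has worse expected cost than what the factor-revealing LP already charges for clients in $C_{\gamma}$, so that the overall approximation ratio is governed by $C_{\gamma}$. First I would dispose of the extreme subcase $g_j^* = 1$ (or $\gamma(1-g_j^*) = 0$): such a client is simply rejected, contributing exactly $p_j$ both to the algorithm and to the LP, so it is charged at ratio $1$ and is irrelevant. For the generic case $0 < \gamma(1-g_j^*) < 1$, the key observation is that after scaling, the client $j$ has total fractional connectivity $\gamma(1-g_j^*) < 1$ to paths, and the remaining mass $1 - \gamma(1-g_j^*)$ is on the penalty. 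Since rejecting $j$ costs $p_j$, and the algorithm in step 6 connects $j$ to a closest open path only if that is cheaper than $p_j$, the expected cost of $j$ is at most $\min\{p_j, \text{(expected connection cost)}\}$, and in particular at most $p_j$. Meanwhile the LP pays $(1-g_j^*)\cdot(\text{connection component}) + g_j^* p_j$; I would compare these and argue the ratio is at most the one attained when the whole budget is put into connection, which is exactly the situation of a $C_{\gamma}$ client.

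The cleaner way to run this comparison is to reuse the ``penalty-facility'' device from Lemma~\ref{uniform-lemma}, now only in the analysis rather than as a reduction: treat the penalty $p_j$ of a client $j$ as an additional top-level path of cost $p_j$ that is ``open'' with the residual volume $1 - \gamma(1-g_j^*)$ (this residual is nonnegative precisely because $j \in \bar{C}_\gamma$). Then $j$ behaves exactly like a client whose scaled connectivity volume is $1$ — i.e. a $C_\gamma$-type client — with one of its ``facilities'' being the penalty-facility at distance $p_j$. The bound of Corollary~\ref{upperbound_for_c} (and its $k$-level analogue built from the functions $F_k$ and Lemma~\ref{cluster_close_distance}) then applies verbatim to this augmented client, and because the penalty-facility has volume at least $1-\gamma(1-g_j^*) > 0$ it is always available to serve $j$ as a ``fallback'', so $j$ never needs the cluster-center detour that is the expensive term for genuine $C_\gamma$ clients. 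Hence $j$'s charged ratio is no larger. I would also invoke Lemma~\ref{ineq_proof} here, since it was evidently stated precisely to handle the $\gamma(1-g_j^*)$ factor appearing in $D_{max}^C(j)$ for scaled, partially-penalized clients, and to make the detour bound degrade gracefully as $\gamma(1-g_j^*)$ shrinks toward $1$.

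Concretely the steps, in order, would be: (i) fix $\gamma = \gamma_l$ and a client $j \in \bar{C}_\gamma$, write $a := \gamma(1-g_j^*) \in (0,1)$; (ii) augment $j$ with a penalty-facility of volume $1-a$ and distance $p_j$, forming a ``virtual'' $C_\gamma$ client $\tilde j$ with unit close volume; (iii) check that the algorithm's actual cost for $j$ is at most the bound one gets for $\tilde j$, using that connecting to the penalty-facility corresponds exactly to the algorithm's option of paying $p_j$, and that $\tilde j$ incurs no clustering detour because its close set already has volume $1$ and is entirely ``self-opened'' via the penalty; (iv) observe the fractional (LP) cost of $j$ equals that of $\tilde j$ by construction ($g_j^* p_j$ becomes $(1-a/\gamma)$-mass times distance $p_j$, etc.), so the ratio for $j$ equals the ratio for some feasible $C_\gamma$ instance; (v) conclude the worst ratio over all clients is attained within $C_\gamma$. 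The main obstacle I anticipate is step (iii): making rigorous that the randomized path-opening on the real tree, together with the separate ``penalty option'', dominates — in the sense of expected cost — the behavior of the rounding applied to the augmented instance where the penalty-facility participates as a genuine volume-carrying facility. One has to be careful that the penalty-facility does not help any other client (its distance from a different client $j'$ would have to be taken as $+\infty$, or simply excluded), and that the probabilities $p_l$ in Corollary~\ref{upperbound_for_c} only improve when part of the unit volume sits at the penalty-facility with its guaranteed availability; this monotonicity in the volume profile is the technical heart of the argument.
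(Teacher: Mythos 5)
Your core idea is the same as the paper's: view the penalty of a client $j\in\bar{C}_{\gamma}$ as a fake facility at distance $p_j$ filling the residual volume, so that $j$'s fallback cost is $p_j$ instead of the cluster-detour term, and then compare to the $C_{\gamma}$ bound via Lemma~\ref{ineq_proof}. The one place where your write-up diverges is exactly the step you flag as the ``technical heart'': you want the penalty-facility to participate in the randomized rounding of an augmented instance, which forces you to prove a stochastic-domination/monotonicity statement about the opening probabilities. The paper never needs this, because the penalty is a \emph{deterministic} option: the algorithm's cost for $j\in\bar{C}_{\gamma}$ is simply bounded by the probability-weighted distances to the fractionally used paths plus $(1-F_k(\gamma(1-g_j^*)))\,p_j$, with no rounding of the fake facility at all. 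The comparison is then a one-line chain: $p_j\le D_{max}^C(j)$ (the penalty-facility is the farthest ``close'' facility, since at LP optimality every path fractionally used by a penalized client costs at most $p_j$), and $D_{max}^C(j)\le\gamma(1-g_j^*)D_{av}(j)+(3-\gamma(1-g_j^*))D_{max}^C(j)$ by Lemma~\ref{ineq_proof}, so the $\bar{C}_{\gamma}$ bound is pointwise dominated by the expression charged to $C_{\gamma}$ clients. So your anticipated obstacle dissolves once you drop the requirement that the augmented client be rounded; with that simplification your argument is the paper's. One further omission: the paper's proof of Lemma~\ref{worst_case} also treats the penalty option for clients $q\in C_{\gamma}$ themselves, bounding the term $\gamma g_q^* p_q+\max\{0,1-\gamma g_q^*\}(\cdot)$ by the standard detour expression using that the algorithm rejects $q$ only when $p_q\le\gamma(1-g_q^*)D_{av}(q)+(3-\gamma(1-g_q^*))D_{max}^C(q)$; your proposal addresses only the $\bar{C}_{\gamma}$ side, so this half of the case analysis would still need to be written out.
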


\begin{proof}
 We have two types of clients divided for two sets $C_{\gamma}$ and $\bar{C_{\gamma}}$. Lets sort facilities in nondecreasing distances from client $j$. In that proof $l$ is number of facilities which has positive flow from $j$ in considering (scaled up) fractional solution. Suppose the first case $q \in C_{\gamma}$, then we can upper bound his connection and penalty cost like that 
 $$E[C_q + P_q] \leq \sum_{i = 1}^{l}(F_k(\sum_{j = 1}^{i} y_j) - F_k(\sum_{j = 1}^{i-1} y_j)) d(q,j) $$ $$ + (1 - F_k(\gamma (1 - g))) (\gamma g_j p_j + (max \{0, 1 - \gamma g_j\}) (\gamma (1 - g) D_{av}(q) + (3 - \gamma (1 - g)) D_{max}^C(q)))$$ $$\leq \sum_{i = 1}^{l}(F_k(\sum_{j = 1}^{i} y_j) - F_k(\sum_{j = 1}^{i-1} y_j)) d(q,j) $$ $$ + (1 - F_k(\gamma (1 - g))) (\gamma (1 - g) D_{av}(q) + (3 - \gamma (1 - g)) D_{max}^C(q)) $$ Inequality holds because $p_q \leq \gamma (1 - g) D_{av}(q) + (3 - \gamma (1 - g)) D_{max}^C(q)$ in other case $q$ would be connected with facility in that distance instead of using penalty.
 
 In the second case we have that $q \in \bar{C_{\gamma}}$. Connection and penalty cost of client $q$ can be upper bounded in below way
 $$E[C_q + P_q] \leq (F_k(\sum_{j = 1}^{l} y_j) - F_k(\sum_{j = 1}^{l-1} y_j)) d(q,j) + (1 - F_k(\gamma (1 - g))) p_q$$ Note that for each client $j \in \bar{C_{\gamma}}$ the truth is $p_j \leq D_{max}^C(j)$, so from Lemma \ref{ineq_proof} we have that the worst case approximation ratio is for clients from set $C_{\gamma}$.
 \qed
\end{proof}

\begin{lemma}
 \label{g_0}
 For clients $j \in C_{\gamma}$ we can treat its penalty as a facility.
\end{lemma}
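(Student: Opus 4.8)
The plan is to convert the term $g_j^*p_j$ for $j\in C_\gamma$ into a fractional connection of $j$ to one extra facility and then invoke the $k$-level UFL analysis of Section~2 (with the $k$-level opening-probability function $F_k$ from \cite{Rybicki}) essentially unchanged. Attach to each $j\in C_\gamma$ a \emph{penalty-facility} $\pi_j$ at distance $p_j$ from $j$, with zero opening cost, realised as a dedicated connection path, and with fractional opening $g_j^*$ in the unscaled optimum $(x^*,g^*)$; as with ordinary facilities we split it into many tiny copies so that after the scaling of step~2 each copy is still tiny, the copies jointly having opening $\gamma g_j^*$. The total fractional volume of facilities serving $j$ then becomes
$$\underbrace{1}_{\text{close reals}}+\underbrace{\gamma(1-g_j^*)-1}_{\text{distant reals}}+\underbrace{\gamma g_j^*}_{\pi_j}=\gamma,$$
exactly the volume the rounding operates with in the penalty-free case; the contribution $g_j^*p_j$ of $\pi_j$ to the fractional connection cost equals the penalty term it replaces and $\pi_j$ adds nothing to the fractional facility cost, so the LP optimum is untouched.

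The first thing to check is that $\pi_j$ is a \emph{distant} facility of $j$, i.e. $p_j\ge D_{\max}^C(j)$. Optimality of $(x^*,g^*)$ rules out any path $p$ with positive flow from $j$ and $c_p>p_j$: moving an infinitesimal amount of that flow onto $g_j$ keeps all of (\ref{lp:out_one})--(\ref{lp:f_open_enough}) satisfied — only $j$'s own constraints are affected, and lowering $x_p$ only relaxes (\ref{lp:f_open_enough}) — while strictly decreasing the objective. After scaling, the facilities $j$ was connected to in $(x^*,g^*)$ have total opening at least $\gamma(1-g_j^*)$ and all lie within distance $p_j$, so the cost-minimising recomputation of step~2 can, and hence does, realise $j$'s scaled connectivity $\gamma(1-g_j^*)$ using only facilities at distance $\le p_j$; consequently $P_c^j$ (volume exactly $1$) and $P_d^j$ consist of facilities within distance $p_j$ and $\pi_j$ joins the distant set, breaking ties in its favour. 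Two consequences follow: step~4 inspects only close facilities, so its clustering is literally the clustering of the associated $k$-level UFL instance; and Lemma~\ref{cluster_close_distance} applies verbatim with $D_{\max}^k(j)=D_{\max}^C(j)$ and with $D_{av}(j)$ now taken over all $\gamma$ units of volume, thereby absorbing $g_j^*p_j$.

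Next I would show that $E[C_j+P_j]$, the quantity bounded in Lemma~\ref{worst_case}, is pointwise at most the cost the $k$-level UFL algorithm would charge to $j$ on the modified instance, so that Corollary~\ref{upperbound_for_c} and the factor-revealing LP apply to $j$ with $f,c$ and the group averages $c_l$ read off the modified instance. Couple the random openings of the real facilities in the two models. In reality step~6 pays $\min\{p_j,d\}$, where $d$ is the distance to the closest open real facility of $F_j$, or the cluster-center distance of Lemma~\ref{cluster_close_distance} when none is open. In the $k$-level UFL model $\pi_j$ is opened like any other facility (at least one of its copies opens with probability $F_k(\gamma g_j^*)$): when it opens the cost is $\min\{p_j,d\}$, when it does not it is $d$; either way it is at least the real cost, because in reality the penalty is still available precisely in the event that ``$\pi_j$'' fails to open. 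The reverse translation is immediate: in the rounded solution, if $j$ ends up assigned to the path of $\pi_j$ we simply reject $j$ and pay $p_j$, recovering a feasible $k$-level UFLWP solution of the same cost.

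I expect the delicate point to be the second paragraph — ruling out that step~2's recomputation creates a close facility farther than $p_j$, which forces one to pin down exactly what that recomputation does — and, to a lesser extent, checking that the presence of $\pi_j$, whose scaled volume $\gamma g_j^*$ may exceed $1$, fits into the probabilistic opening analysis of Section~2 and its $k$-level refinement without weakening any inequality feeding the factor-revealing LP. The rest — matching volumes, the unchanged LP value, the coupling, and the one-line solution conversion — is routine.
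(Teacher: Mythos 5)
Your proposal is correct in substance but takes a genuinely different route from the paper. The paper's own proof consists of two short observations: (i) if $j\in C_{\gamma}$ is a cluster center, then since $\gamma(1-g_j^*)\geq 1$ its volume-one close set contains only real facilities and one of them is opened with certainty, so its connection and penalty cost do not depend on $g_j^*$ at all; (ii) if $j$ is not a center, pretending its penalty is a facility is harmless because no other client ever consults this fake facility --- every client's bound involves only the facilities fractionally serving it and those serving its cluster center. You instead materialize the penalty as a dedicated facility $\pi_j$ of scaled volume $\gamma g_j^*$, prove it lands among the \emph{distant} facilities (the LP-optimality fact that every flow-carrying path of $j$ costs at most $p_j$, combined with the $C_{\gamma}$ volume condition), conclude that the clustering is untouched, and then transfer the $k$-level bound by per-event domination, using that the penalty is always available in reality so every branch of the analysis that ``uses'' $\pi_j$ overcharges. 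This buys an explicit bookkeeping of why the appendix may later ``suppose $g_j^*=0$'' (the volumes sum to $\gamma$ and $g_j^*p_j$ is absorbed into $D_{av}(j)$), which the paper leaves implicit, and the distantness claim via LP optimality is an ingredient the paper does not need. One caveat: your closing ``reverse translation'' is immediate only for $j$ itself; a different client assigned to $\pi_j$ could not be converted at equal cost under non-uniform penalties. That is precisely the point handled by observation (ii) above, and in your per-client framework it is harmless because no other client's bound (its own facilities plus its center's, whose close sets are all real) ever references $\pi_j$ --- but this should be stated explicitly rather than left to the solution-conversion sentence.
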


\begin{proof}
If $j$ is a cluster center, $j$ will have at least one (real) facility open in {its} set of close facilities. Thus, its connection and penalty cost are independent of the value of $g_j^*$. If $j$ is not a cluster center and we pretend
its penalty as a facility, no other client $j'$ will consider to use this fake facility. Because $j'$ only looks at facilities
fractionally serving him, and the facilities which serve the center of the cluster containing $j'$.
\qed
\end{proof}

\subsection{Approximation ratio}
\label{group_analize}

A single algorithm $A(\gamma)$ has expected facility opening cost $E[F]\leq \gamma \cdot F^*$ and expected connection and penalty cost $E[C+P] \leq max \{3 - 2 \cdot F_k(\gamma), \frac{2 - F_k(\gamma) - F_k(1)}{1 - \frac{1}{\gamma}}\} \cdot (C^* + P^*)$ (see Appendix \ref{sec:single_algorithm} for a detailed proof). To obtain an improved approximation ratio we run algorithm $A$ for several values of $\gamma$ and select the cheapest solution. The following LP gives an upper bound on the approximation ratio.

\begin{eqnarray}
  \label{k_lp:max}
  max~T && \\
  \label{k_lp:t_upperbound}
  \gamma_i f + \sum_{l = 1}^{n} c_l \cdot p_l^i + (1 - e^{-\gamma_i}) (\gamma_i c + (3 - \gamma_i)c_{i+1}) \geq T &&  \forall_{i < n}\\
  \label{k_lp:c_constraint}
  \frac{1}{\gamma_1} \cdot c_1 + \sum_{i = 2}^{n}(\frac{1}{\gamma_i} - \frac{1}{\gamma_{i-1}}) \cdot c_i = c && \\
  \label{k_lp:c_order}
  0 \leq c_i \leq c_{i+1} \leq 1&&\forall_{i < n}\\
  \label{k_lp:opt_sol}
  f + c = 1 && \\
  f, c\geq 0 &&
\end{eqnarray}

Since the number of levels has influence on connection probabilities, the values of $p_l^i$ need to be defined more carefully than for UFL.
In particular, for $l = 1$ we now have $p_1^i = 1 - F_k(\frac{\gamma_i}{\gamma_1})$ and $p_l^i = F_k(\frac{\gamma_i}{\gamma_{l-1}}) - F_k(\frac{\gamma_i}{\gamma_{l}})$ for $l > 1$.

The Table \ref{improved_ratios} summarizes the obtained ratios for a single algorithm (run with the best choice of $\gamma$ for particular $k$) and for a group of algorithms.

\begin{figure}
  \centering
  \includegraphics[height=57mm]{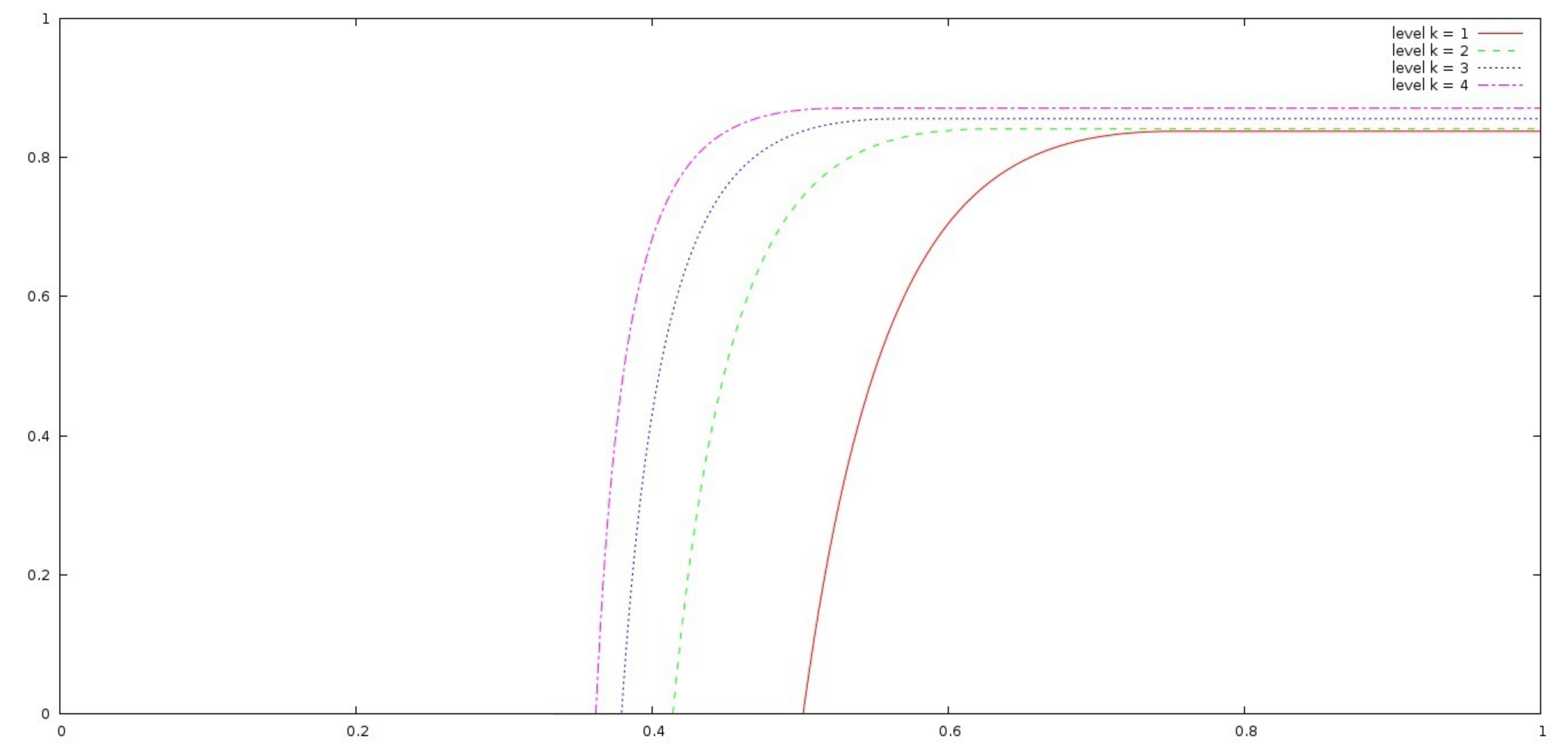}
  \caption{Worst case profiles of $h(p)$ (i.e., distances to facilities) for $k = 1, 2, 3, 4$ obtained from solution of the LP in section \ref{group_analize}. \va{X-axis is volume of a considered set and y-axis represents distance to the farthest facility in that set. Values of function $h(p)$ are in one-to-one correspondence with values of $c_i$ in LP from section \ref{group_analize}.}}
  \label{fig:hardcase}
\end{figure}
\begin{figure}
\begin{minipage}{6cm}
  \centering
  \includegraphics[width=6cm]{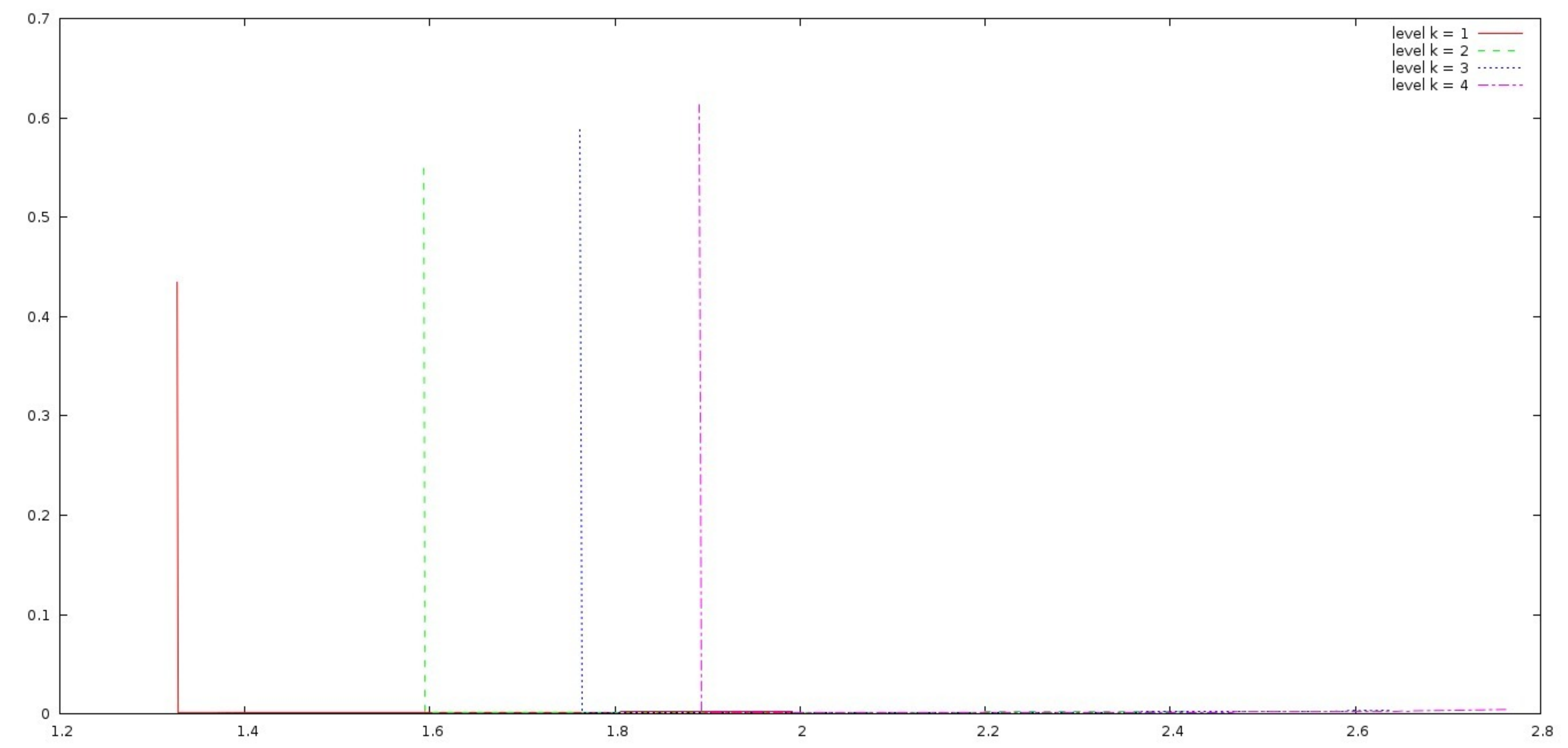}
\end{minipage}
\begin{minipage}{6cm}
  \centering
  \includegraphics[width=6cm]{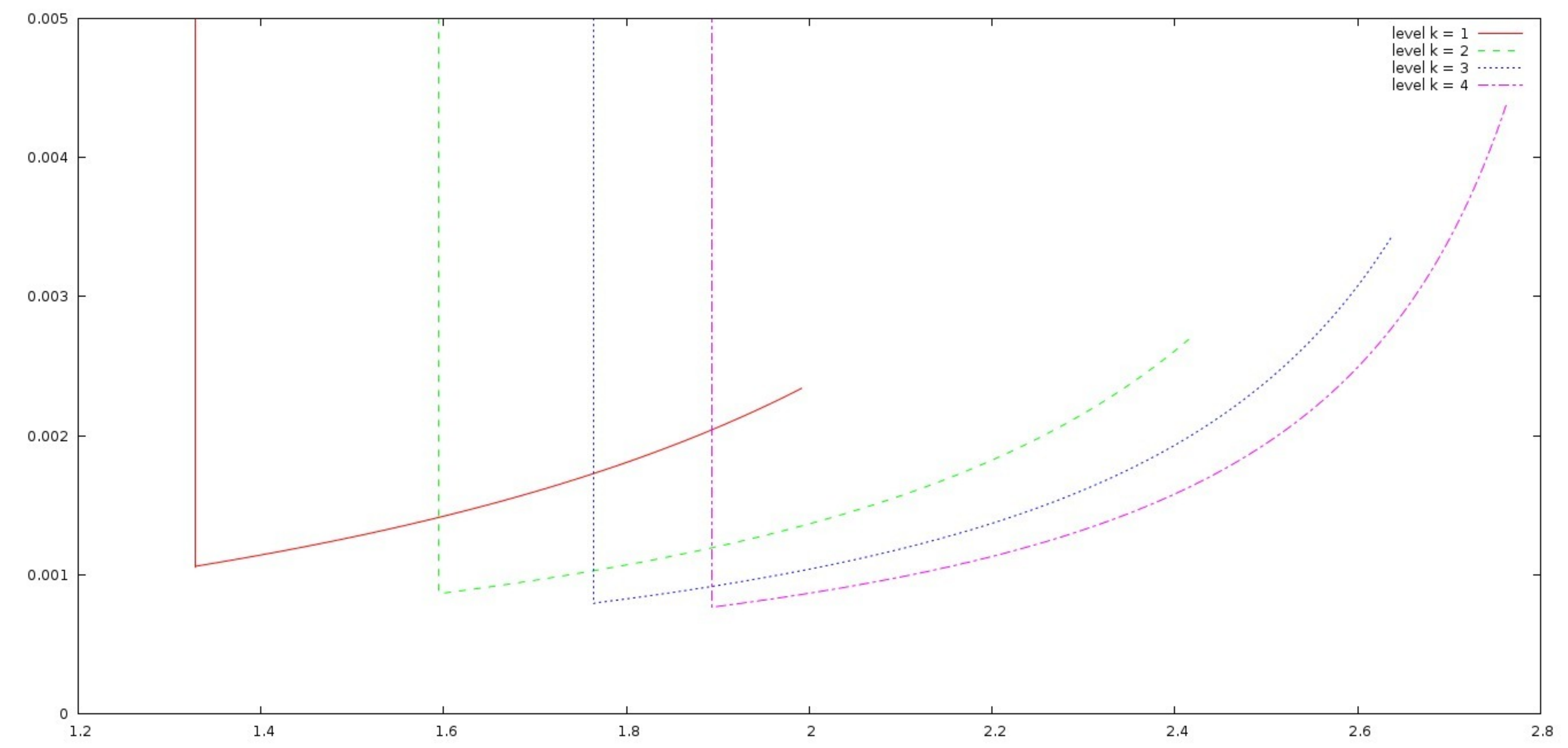}
\end{minipage}
 \caption{Probabilities of using a particular $\gamma$ in a randomized alg. (from the dual LP) for $k = 1, 2, 3, 4$.
 Left figure: general view; Right figure: close-up on small probabilities.}
  \label{fig:probabilities}
\end{figure}

\newpage
{\large \bf APPENDIX}
\begin{appendix}

\section{Graph modification} \label{sec:modiffication}
The idea is to construct a graph in which each facility on level $t$ is connected with exactly one facility on level $t+1$. We will describe in a few words how to do it, but the best idea is to read section 2 in \cite{Rybicki}. Let $F'$ and $F$ be the set of facilities before and after modification respectively. For the highest level nothing change which means $F'_{l_k} = F_{l_k}$. For each facility $i \in F'_{l_{t-1}}$ we have $|F_{l_t}|$ copies each connected with a different facility in $F_{l_t}$. The cardinality of set $F_{l_{t-1}}$ is equal to $|F_{l_t}|\cdot|F'_{l_{t-1}}|$. In general: for each $t = 1, 2, \cdots , k -1$ set $F_{l_t}$ has $|F_{l_{t+1}}|$ copies of each element in set $F^{'}_{l_t}$ and each copy is connected with a different element in the set $F_{l_{t+1}}$. Note that there is an optimal integral solution with the form of a forest. So we do not lose anything important for this optimal solution by modifying the graph in a way described above.

\begin{figure}
  \centering
  \includegraphics[height=35mm]{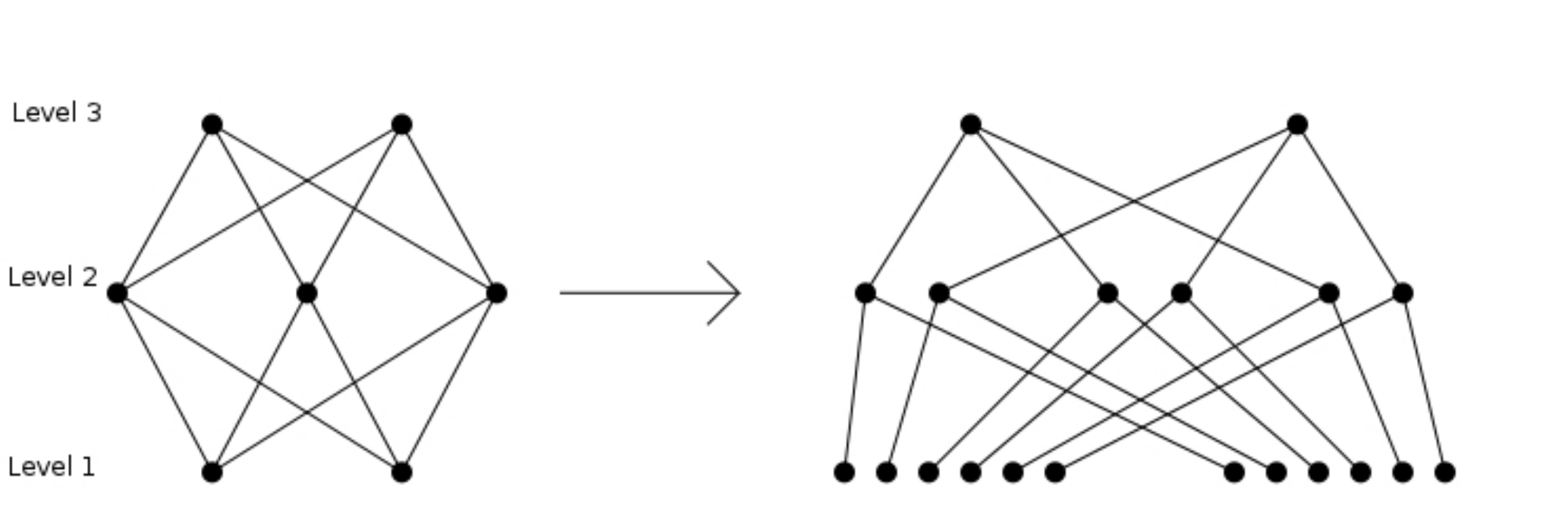}
  \caption{Figure presents graph modification.}
\end{figure}

\section{Functions $f_k(\cdot)$ and $F_k(\cdot)$}\label{sec:functions of probability}
Lets consider set $S_j \subseteq P^j$ of paths which start in client $j \in C$ and end in the root of a tree $T$. We say that client $j$ has flow of value $z$ to tree $T$ if the total value of paths in set $S_j$ is equal to $z$. Byrka et al. in \cite{Rybicki} gives the following definition of function which is a lower bound for the probability of at least one path of a tree will be open as a result of rounding procedure on that tree. We use $max_{x}$ to denote $max_{x_{1}+\ldots+x_{n} = x, x_i > 0}$, similar for $min_{x}$.

\[
f_{k}(z) =
   \begin{dcases*}
     z &  when  k = 1 \\
     z \cdot min_{z} (1 - (\prod_{i = 1}^{n}(1 - f_{k-1}(\frac{z_{i}}{z})))) & other cases

   \end{dcases*}
\]

It is a product of the probability of opening the root node, and the (recursively bounded) probability that at least one of the subtrees has an open path, conditioned on the root being open. Now we are ready to give a function $F_{k}(x)$ to bound the probability of opening at least one path when we have flow $x$ from one client to more than one tree. Let $F_{k}(x) = 1 - max_{x} \prod_{i = 1}^{n}(1 - f_{k}(x_{i}))$, which is one minus the biggest chance that no tree gives a route from the root to a leaf, using the previously defined $f_k(.)$ function to express the success probability on a single tree.

\section{Analysis of single algorithm}
\label{sec:single_algorithm}

Now we can upper bound the expected connection and penalty cost of single algorithm. As it was proved in Lemma \ref{worst_case} the worst case scenario is for client $j \in C_{\gamma}$ which is not a cluster center, so to upper bound the expected connection and penalty cost we can concentrate on clients from $C_{\gamma}$. Moreover from Lemma \ref{g_0} we can suppose that $g^*_j = 0$.

\begin{lemma}
 The expected connection and penalty cost could be bounded in following way $E[C+P] \leq max \{3 - 2 \cdot F_k(\gamma), \frac{2 - F_k(\gamma) - F_k(1)}{1 - \frac{1}{\gamma}}\} \cdot (C^* + P^*)$.
\end{lemma}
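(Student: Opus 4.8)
The plan is to reduce to a single ``worst'' client by the preceding lemmas, derive the corresponding per-client estimate (the $k$-level analogue of Corollary~\ref{upperbound_for_c}), and then recognise the quantity to be bounded as the optimum of the single-constraint restriction of the factor-revealing LP of Section~\ref{group_analize}. By Lemma~\ref{worst_case} the worst-case connection-plus-penalty ratio is attained by a client $j\in C_{\gamma}$, and, as in the proof of Lemma~\ref{g_0}, such a $j$ may be taken to be a non-center whose penalty has been realised as a facility, i.e.\ $g_j^*=0$; then $P_j^*=0$ and $D_{av}(j)\le C_j^*+P_j^*$ (the scaling does not increase the fractional connection cost). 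It therefore suffices to show $E[C_j+P_j]\le\max\bigl\{3-2F_k(\gamma),\ (2-F_k(\gamma)-F_k(1))/(1-1/\gamma)\bigr\}\cdot D_{av}(j)$.

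For the per-client estimate, sort $P^j$ by distance and split it into the close prefix $P_c^j$ of volume $1$ and the distant part $P_d^j$ of volume $\gamma-1$. The tree-rounding opens each path $p$ with probability $\bar x_p$, and the probability that no path of a prefix of total flow $x$ opens is at most $1-F_k(x)$; in particular no close facility opens with probability at most $1-F_k(1)$ and no facility of $P^j$ opens with probability at most $1-F_k(\gamma)$. When some facility of $P^j$ opens we connect $j$ to the nearest open one, bounding the expected distance by a Sviridenko-style telescoping argument (using $\Pr[p\text{ open}]=\bar x_p$ on the close part and the $F_k$-controlled prefix probabilities on the distant part); when none opens we route $j$ through its cluster center, which always has an open close facility and which, by the generalisation of Lemma~\ref{cluster_close_distance} (legitimate since $g_j^*=0$ and $\gamma<3$), is reached at cost at most $\gamma D_{av}(j)+(3-\gamma)D_{max}^C(j)$. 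This yields an upper bound on $E[C_j+P_j]$ that is linear in the distance profile $h_j$, with the escape probabilities $1-F_k(1),\,1-F_k(\gamma)$ and the term $(3-\gamma)D_{max}^C(j)$ occurring with nonnegative coefficients.

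It then remains to maximise this bound over profiles. Writing $D_{av}^C(j)=(1-\rho_j)D_{av}(j)$ and $D_{av}^D(j)=(1+\rho_j/(\gamma-1))D_{av}(j)$ and using $D_{max}^C(j)\le D_{av}^D(j)$, the maximisation over $\rho_j\in[0,1]$ and over admissible profiles with $D_{av}(j)$ fixed is exactly the linear program obtained from (\ref{k_lp:max})--(\ref{k_lp:opt_sol}) by retaining only the constraint (\ref{k_lp:t_upperbound}) for the chosen $\gamma$ and setting $f=0$. Its objective is linear in the $c_l$, so the optimum is at an extreme point, and a short case check leaves only two candidates: the flat profile (all distances equal, $\rho_j=0$), for which a direct connection costs $D_{av}(j)$ and the cluster detour costs $3D_{av}(j)$, giving the value $3-2F_k(\gamma)$; and the step profile (close facilities at distance $\approx 0$, distant facilities at distance $\gamma D_{av}(j)/(\gamma-1)$, $\rho_j=1$), giving $(2-F_k(\gamma)-F_k(1))/(1-1/\gamma)$. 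Taking the larger of the two proves the lemma; passing from the per-client bound to $E[C+P]\le(\cdot)(C^*+P^*)$ is then immediate by summing over clients.

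The two extreme-profile computations are routine. The step I expect to be the main obstacle is the distant-facility bookkeeping inside the per-client estimate: since the paths of a tree in a $k$-level instance are not opened independently, only the cumulative opening probabilities are under control, and these are governed by the concave functions $F_k$ rather than by linear ones; carrying out the telescoping bound for the connection cost and the ``open distant facility versus cluster-center detour'' comparison tightly enough that the resulting per-client bound is genuinely dominated by the single-constraint LP above --- equivalently, that the worst case really is one of the two extreme profiles --- is the delicate part.
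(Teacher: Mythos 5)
Your proposal follows essentially the same route as the paper's proof in Appendix~\ref{sec:single_algorithm}: reduce to a non-center client of $C_{\gamma}$ with $g_j^*=0$ via Lemmas~\ref{worst_case} and~\ref{g_0}, split the randomness into the three events with probabilities $F_k(1)$, $F_k(\gamma)-F_k(1)$ and $1-F_k(\gamma)$, bound the cluster-center detour by $\gamma D_{av}(j)+(3-\gamma)D_{max}^C(j)$, use $D_{max}^C(j)\le D_{av}^D(j)$, and maximize the resulting expression, linear in $\rho_j$, at its endpoints $\rho_j\in\{0,1\}$ to obtain the two terms of the maximum before summing over clients. The argument is correct and your ``single-constraint LP'' framing is just a restatement of that endpoint maximization, so no substantive difference from the paper's analysis.
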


\begin{proof}
The value of $p_c = F_k(1)$ is a chance that at least one facility will be open in the set of close facilities. $p_d = F_k(\gamma) - F_k(1)$
expresses the chance that at least one distant facility of the considered client is open, but all close facilities are closed. The remaining $p_s=1-p_c-p_d$ is the probability of connecting the considered client to the open facility by its cluster center. The cost of this connection is bounded in Lemma \ref{cluster_close_distance}. Suppose $j^\prime\in C$ is the cluster center of $j\in C$.
$$E[C_j + P_j] \leq p_c \cdot D_{av}^{C}(j) + p_d \cdot D_{av}^{D}(j) + p_s \cdot (\gamma D_{av}(j) + (3 - \gamma)D_{max}(j)))$$
$$\leq (p_c + p_s) \cdot D_{av}^{C}(j) + (p_d + 2p_s) \cdot D_{av}^{D} $$
$$= (p_c + p_s) \cdot (1 - \rho_j) \cdot D_{av}(j) + (p_d + 2p_s) \cdot (1 + \frac{\rho_j}{\gamma - 1}) \cdot D_{av}(j)$$
$$= max \{1 + 2 \cdot p_s, \frac{1 + p_s - p_c}{1 - \frac{1}{\gamma}}\} \cdot D_{av}(j)$$
$$= max \{3 - 2 \cdot F_k(\gamma), \frac{2 - F_k(\gamma) - F_k(1)}{1 - \frac{1}{\gamma}}\} \cdot D_{av}(j)$$
You can find the justification for above inequalities in \cite{Ghodsi}. Summing over all clients we get the lemma.
\qed
\end{proof}
\end{appendix}
\end{document}